\documentclass[18pt,english]{article}
\usepackage[utf8]{inputenc}
\usepackage[a4paper, total={6in, 9in}]{geometry}
\usepackage[pdftex]{graphicx}
\usepackage[table]{xcolor}
\usepackage{float}
\usepackage{hhline}
\usepackage{amssymb}
\usepackage{biblatex}
\usepackage{authblk}
\usepackage{amsmath}
\usepackage{amsthm}
\usepackage{eurosym}
\usepackage{hyperref}
\usepackage{hhline}
\parindent 0em
\parskip 2ex

\newtheorem{definition}{Definition}
\newtheorem{proposition}{Proposition}
\newtheorem{theorem}{Theorem}
\newtheorem{example}{Example}
\newtheorem{lemma}{Lemma}
\newtheorem{corollary}{Corollary}
\newcommand{\commentout}[1]{}





\bibliography{bibl}


\title{
Maximizing Social Welfare with Side Payments
}

\author[1]{Ivan Geffner\thanks{Work done while the author was at FOCAL (CMU).}}
\author[2,3]{Caspar Oesterheld}
\author[2,3]{Vincent Conitzer}

\affil[1]{Utrecht University}
\affil[2]{Carnegie Mellon University}
\affil[3]{FOCAL - Foundations of Cooperative AI Lab}

\date{}

\begin{document}

\maketitle

\begin{abstract}
\commentout{
We study a setting where players can simultaneously commit to perform outcome-dependent side payments before playing an action in a normal-form game. 
Jackson and Wilkie~\cite{jackson2005endogenous} 
showed that, in the one-shot model, the ability of side contracting may not always be beneficial to the players. In fact, they give an example of a game with a Pareto optimal Nash equilibrium in which all the resulting equilibria that arise from the inclusion of side payments are inefficient. The intuition for this phenomenon is that players can end up in a prisoner's dilemma scenario, where each player can selfishly use her commitment power to change the equilibrium structure of the game for her benefit at the expense of the other players' welfare.

In order to avoid such outcomes, we consider a variant of Jackson and Wilkie's model where players can commit to small
side payments during successive rounds. We show that, under this model, if players start with a normal-form game $\Gamma$ and a non-degenerate Nash equilibrium $\vec{\sigma}$, they can implement all welfare-maximizing utility profiles that strictly Pareto improve $\vec{\sigma}$. 
}
We examine normal-form games in which players may \emph{pre-commit} to outcome-contingent transfers before choosing their actions.  In the one-shot version of this model, Jackson and Wilkie~\cite{jackson2005endogenous} showed that side contracting can backfire: even a game with a Pareto-optimal Nash equilibrium can devolve into inefficient equilibria once unbounded, simultaneous commitments are allowed.  The root cause is a prisoner’s dilemma effect, where each player can exploit her commitment power to reshape the equilibrium in her favour, harming overall welfare.

To circumvent this problem we introduce a \emph{staged-commitment} protocol.  Players may pledge transfers only in small, capped increments over multiple rounds, and the phase continues only with unanimous consent.  We prove that, starting from any finite game~$\Gamma$ with a non-degenerate Nash equilibrium~$\vec{\sigma}$, this protocol implements every welfare-maximising payoff profile that \emph{strictly} Pareto-improves $\vec{\sigma}$.  Thus, gradual and bounded commitments restore the full efficiency potential of side payments while avoiding the inefficiencies identified by Jackson and Wilkie.
\end{abstract}

\section{Introduction}\label{sec:intro}

Nash equilibrium is a central tool in game theory, yet it abstracts away an important feature of many real-world settings: players often negotiate before they move. Outside of the “vacuum” assumed in normal-form games, agents can coordinate and steer away from inefficient outcomes, frequently by promising side payments that realign incentives and support mutually preferred strategies.
Governments routinely illustrate this logic: they subsidize firms that curb greenhouse-gas emissions, compensating them for the costs of cooperation. Similarly, criminal defendants often secure plea bargains that exchange monetary restitution for lighter sentences. For a more concrete example, consider the two-player Prisoner’s Dilemma whose payoffs appear in the left panel of Table 1.

\begin{example}\label{ex-prisoner}

\[  
\begin{array}{|c||c|c|}
\hline
 & C & D \\
 \hhline{|=#=|=|}
 C & (0,0) & (-2, 1) \\
 \hline
 D & (1, -2) & (-1,-1).\\
 \hline
\end{array}
\qquad \qquad
\begin{array}{|c||c|c|}
\hline
 & C & D \\
 \hhline{|=#=|=|}
 C & (0,0) & (-1, 0) \\
 \hline
 D & (0, -1) & (-1,-1)\\
 \hline
\end{array}
\]
\end{example}

As usual, the only Nash equilibrium is for both players to defect (D), but both players defecting is Pareto dominated by the outcome in which everyone cooperates (C). Now suppose that there is a pre-play stage in which both players can submit binding side payment contracts before playing an action in the example above. Then, each player can commit to pay $1$ utility to the other player if the other player cooperates. By doing so, they effectively transform the original payoff matrix into the one on the right,
making $(C,C)$ a Nash equilibrium of the resulting game. One may wonder if this example can be generalized to arbitrary normal-form games, and if the inclusion of side payments always results in an increase of social welfare. The intuition is that if player 
$i$’s gain from some action 
$a$ is smaller than the loss it imposes on others, those others can offer compensation below their loss yet sufficient to induce 
$i$ to switch. However, Jackson and Wilkie~\cite{jackson2005endogenous} show that such contracts are not necessarily best responses when all players can write them; indeed, transfers can be used strategically to condition opponents’ behavior. The point is illustrated by an asymmetric variant of Chicken (left panel below). 

\begin{example}\label{ex:chicken}

\[\begin{array}{|c||c|c|}
\hline
 & \text{Swerve} & \text{Straight} \\
 \hhline{|=#=|=|}
 \text{Swerve} & (0,0) & (1,2) \\
 \hline
 \text{Straight} & (2, 0) & (-10, -10)\\
 \hline
\end{array}
\qquad \qquad
\begin{array}{|c||c|c|}
\hline
 & \text{Swerve} & \text{Straight} \\
 \hhline{|=#=|=|}
 \text{Swerve} & (0,0) & (-19,22) \\
 \hline
 \text{Straight} & (2, 0) & (-10, -10)\\
 \hline
\end{array}
\]
\end{example}

In this game, (Swerve, Straight) is the outcome that maximizes social welfare. However, the first player (i.e., the row player) can commit to pay 20 utility on (Swerve, Straight), which transforms the payoff matrix into the one on the right.
Now, going straight is a dominant strategy for player 1, and therefore (Straight, Swerve) becomes the only Nash equilibrium. Thus, committing this way guarantees that player 1 gets 2 utility, as opposed to 1 utility in the (Swerve, Straight) outcome of the original game.
Another problem that arises
with the inclusion of side payments is the issue of outcome fairness.
For instance, consider a normal-form game for two players in which each one of them can play either $A$ or $B$ and the payoff matrix is as follows.

\begin{example}\label{ex:unfair}
$$\begin{array}{|c||c|c|}
\hline
 & A & B \\
 \hhline{|=#=|=|}
 A & (0,0) & (-2, -2) \\
 \hline
 B & (-2,-2) & (10, -3)\\
 \hline
\end{array}$$
\end{example}

In this example, $(A,A)$ is the only Nash equilibrium by iterated strict dominance. 
With the inclusion of side payments, we might hope that they can reach 
an
agreement on $(B,B)$, with a transfer from player 1 to player 2, so that both players get strictly positive utility. However, if $(B,B)$ is a Nash equilibrium after committing, it would also be so if player $2$'s utility in $(B,B)$ was $-2$. Since it is not beneficial for player $1$ to offer more than what is strictly necessary, this would imply that player $2$ gets less utility than in the original equilibrium.

The underlying problem is that pre-play side payments let players drastically reshape the game’s equilibrium set without any safeguard against defection during the commitment stage. We therefore consider payment protocols that build in such safeguards. One natural device is a rollback option: after the commitment phase, a second pre-play stage is held in which each player votes either to accept the proposed side payments or to void them all. If at least one player votes to roll back, all contracts are annulled and the original game is played; otherwise, the transfers remain in force. Under this rollback model, one can implement every welfare-maximizing utility profile that Pareto-improves on at least one equilibrium of the base game. For example, in Example~\ref{ex:unfair}, suppose we wish to construct a subgame perfect equilibrium in which player 1 obtains 
$4$ utility and player $2$ obtains $3$ (the maximum social welfare there is $7$). We can proceed as follows.
\begin{enumerate}
    \item First, player $1$ commits to pay $6$ utility to player $2$ if the outcome is $(B,B)$.
    \item If player $1$ made the commitment above, both players agree on the commitments and play $(B,B)$. Otherwise, they both vote for a rollback and play $(A,A)$.
\end{enumerate}

One can verify that this strategy is a subgame-perfect equilibrium, and that the same construction extends to finite games with any number of players. However, rollback power is typically unrealistic, as it presumes that players can nullify others’ binding contracts.

We propose a staged-commitment protocol that achieves nearly the same outcomes as the rollback benchmark. Instead of a single pre-play stage, players may make publicly observed, binding transfers in small (capped) increments across multiple stages; transfers can be directed to other players or to an uninvolved sink (i.e., \emph{burning money}). Progress to the next stage requires unanimous consent; if any player refuses, the commitment phase ends and play proceeds in the game augmented by all transfers made so far. The intended interpretation is a gradual bargaining process in which players credibly signal goodwill via incremental commitments. This is reminiscent of disarmament negotiations, where arsenals are reduced step by step to avoid opportunistic attacks (see \cite{Deng17:Disarmament,Deng18:Disarmament}). Our main result shows that, under this protocol, the implementable utility profiles coincide with those under rollback, except for a few degenerate cases. For example, in Example~\ref{ex:unfair}, with a per-round cap of $1$, there exists a subgame-perfect equilibrium yielding $1$ utility to player $1$ and $3$ utility to player $2$: player $1$ commits $1$ utility to player~2 on $(B,B)$ in each of six consecutive stages, terminates the commitment phase, and both then play $(B,B)$; any deviation triggers immediate termination and reversion to $(A,A)$.

\begin{center}
    \includegraphics[scale=0.7]{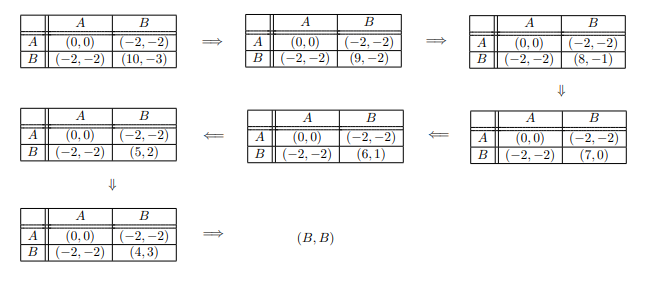}
\end{center}

One can verify that this strategy is a subgame-perfect equilibrium. After any deviation in the commitment phase, the per-round cap of $1$ ensures that $(A,A)$ remains a Nash equilibrium of the resulting game. In the continuation, \(A\) is therefore a best response (when the opponent plays $A$), and the path reverts to $(A,A)$. Any deviation thus yields payoffs $(0,0)$, strictly worse than the target $(4,3)$.

The approach for arbitrary normal-form games follows the same lines as the example above. Players commit to side payments in a way such that, after each commitment round, the induced game admits a \emph{punishment equilibrium}: a Nash equilibrium that gives each player a lower payoff than the targeted full-game path (the commitment history plus the continuation play). The main idea is that, if a player ever defects from the full-game strategy, the remaining players can hurt them by ending the commitment phase and playing this punishment equilibrium. Even though the defecting player can theoretically break the punishment equilibrium by placing additional side payments, the upper bound on the payments that can be made in a single round guarantees that the punishment equilibrium persists regardless of how the players commit, and therefore that players have no incentives to defect. The general proof proceeds by cases, the most delicate being when the social-welfare-maximizing outcome already occurs with positive probability in the original equilibrium (and hence in the punishment equilibrium). Our main technical contribution is a protocol that sequences commitments so that the desired outcome eventually becomes a Nash equilibrium while preserving the punishment structure at every history.

Our construction for arbitrary finite normal-form games mirrors the example above. We choose transfers so that, after every commitment round, the induced game admits a \emph{punishment equilibrium}: a Nash equilibrium that gives each player a lower payoff than the targeted full-game path (the commitment history plus the continuation play). If any player deviates, all players terminate the commitment phase and play this punishment equilibrium. Although a deviator could, in principle, try to overturn the punishment equilibrium by adding further transfers, the per-round cap ensures that the punishment equilibrium survives after any single round, so no profitable deviation exists. 

\subsection{Implementability vs. Convergence}

It is important to note that implementability alone does not guarantee that players will converge to a welfare-maximizing profile that Pareto-improves on the original equilibrium. One might worry that (i) the constructions are too intricate for agents to discover independently; (ii) selecting among Pareto-improving profiles creates a coordination problem reminiscent of \emph{Battle of the Sexes}, where all players prefer to agree but each favors receiving a larger share of the surplus; and (iii) introducing a commitment phase could generate new inefficient equilibria. In practice, these concerns are limited. First, a central aim of this paper is to provide explicit, usable implementations for each target equilibrium. Second, coordination can be handled by fixing a \emph{default} fallback payoff profile to be used when there is no explicit agreement (e.g., equal division of the surplus, or allocation proportional to players’ baseline utilities). Third, any inefficient equilibria arising from the pre-play phase cannot stray far from the original ones, since any player can unilaterally terminate immediately after the first (capped) commitments; in particular, when the base game has a unique equilibrium, all such inefficient equilibria are Pareto-dominated by the welfare-maximizing alternatives. While this does not theoretically guarantee convergence, there is ample evidence that, when incentives are aligned, rational players tend to avoid inefficient outcomes~\cite{gold2020team, bardsley2010explaining}. Proposed explanations include \emph{team reasoning}\cite{sugden1993thinking, bacharach1999interactive, bacharach2006beyond}, where players select the option best for the group, and evidential decision theory\cite{gibbard1978counterfactuals}, under which a player expecting others to reason similarly optimally chooses the Pareto-efficient profile.

\subsection{Related Work}

Our work is closest to Jackson and Wilkie~\cite{jackson2005endogenous}. They study a single round of simultaneous, unbounded commitments before actions are chosen in the base game and do not allow money burning. Within this model, they characterize the set of supportable equilibrium payoffs and show that side payments can reduce social welfare: there are games with Pareto-efficient outcomes in which every equilibrium induced by side payments is inefficient. Sauerberg and Oesterheld~\cite{Sauerberg24:Computing} analyze a closely related setting in which players commit to transfers while simultaneously choosing their actions (rather than acting after observing others’ commitments), and Gupta and Schewe consider a similar two-player model where the first mover commits both to an action and to a \emph{bribe} vector before the second moves. Although these modeling choices may appear minor, they lead to substantially different equilibrium sets from ours.

A separate line of work, beginning with Guttman~\cite{guttman1978understanding, guttman1987non}, Danziger and Schnytzer~\cite{danziger1991implementing}, Guttman and Schnytzer~\cite{guttman1992solution}, and Varian~\cite{varian1994sequential, varian1994solution}, achieves efficiency in public-good environments via matching schemes. The available contracts there are much more restrictive (for example, payments proportional to particular types of actions), and thus differ substantially from the side-payment flexibility we study.

Another closely related literature examines {\em disarmament games}. These also involve iterative commitments, but players directly shrink their own strategy sets rather than transferring utility. As in our setting, participants cannot reduce ``too much'' at once without becoming vulnerable, yet Deng and Conitzer~\cite{Deng17:Disarmament,Deng18:Disarmament} show that gradual, small-step commitments can achieve the desired targets. This aligns with the results presented in this paper.

Payments used to shape outcomes also arise in several other directions. In mechanism design, payments are chosen to induce truthful reports, but the underlying game is designer-specific rather than arbitrary. For instance, in Bayesian allocation problems that maximize reported total value, the VCG mechanism~\cite{vickrey1961counterspeculation, clarke1971multipart, groves1973incentives} makes truthful reporting optimal. The principal-agent literature \cite{Lambert1986, Demski1987, Stoughton1993, Barron2003, Feess2004, Gromb2007, Malcomson2009} studies settings in which one player (the principal) can commit to pay another player (the agent) in a way that depends on what action the agent takes. However, payments typically go only one way and this literature focuses on issues absent from our setting, such as imperfect observability of the agent's action or differing risk preferences between agent and principal.
Meanwhile, other results involve commitments to outcome-dependent payments for arbitrary games.  In Monderer and Tennenholtz's $k$-implementation paper~\cite{Monderer04:K}, an outside party commits to outcome-dependent payments to the players in the game in order to transform a Nash equilibrium into a dominant strategy equilibrium. Their approach is similar to ours in the sense that these payments are always promised on outcomes outside of the resulting equilibrium, which means that the payer ultimately does not have to put her own resources on the line. In later work on {\em internal implementation}~\cite{Anderson10:Internal}, the party with the ability to commit to such payments is modeled as one of the players in the game.

Many relaxations of Nash equilibrium introduce a {\em mediator}. The standard version communicates only via {\em signals}: correlated equilibrium~\cite{aumann1974subjectivity} can be viewed as a mediator sending (possibly correlated) recommendations~\cite{lehrer2010signaling, arieli2023mediated, arieli2022bayesian}, and this perspective also underlies {\em Bayesian persuasion}~\cite{kamenica2011bayesian, arieli2022bayesian}. More general models allow multi-round communication with a mediator~\cite{forges1986approach,ben2003cheap,abraham2006distributed,abraham2019implementing,abraham2008lower,geffner2023lower,gerardi2004unmediated, geffner2024communication}. Stronger mediators can do more than signal: empowering the mediator to act on players’ behalf~\cite{Rozenfeld07:Routing, Monderer09:Strong} or to expand action sets~\cite{geffner2024making} yields equilibria resilient to coalitional deviations. \emph{Program equilibrium}~\cite{McAfee1984, Howard1988, Rubinstein1998, Tennenholtz04:Program, Kalai07:Commitment, Oesterheld18:Robust, Critch2019, cooper2024characterising} can likewise be seen as mediated interaction enabling players to commit to and share strategies. In a way, this is reminiscent of our setting since it requires players to commit (in this case, to a strategy instead of side payments) and allows players to converge on more efficient equilibria that would otherwise be impossible. In particular, it can be shown that, under this model, players can reach cooperation on the prisoner's dilemma. Relatedly, mediators that permit simulating opponents can produce similar effects~\cite{kovarik2023game}. In this light, our model can be interpreted as a mediated protocol: players submit side payments to a mediator, who can terminate the commitment phase (triggering play in the base game) and enforce transfers at the realized outcome.

The remainder of the paper is organized as follows. Section~\ref{sec:defs} formally introduces the model, notation, and basic definitions. Section~\ref{sec:main-results} states our main results, and Section~\ref{sec:thm-bot} provides proofs for the simpler cases. We conclude in Section~\ref{sec:conclusion}. Additional technical material appears in Appendices~\ref{sec:arb-games}, \ref{sec:2pl-2act}, \ref{sec:thm-main}, and \ref{sec:thm-strong}.

\commentout{
Our work is most closely related to that of Jackson and Wilkie~\cite{jackson2005endogenous}. They consider only one round of simultaneous unbounded commitments before playing an action in the underlying game and they don't allow money burning. Under this model, they characterize the set of supportable equilibrium payoffs. In particular, they show that the inclusion of side payments may actually be harmful to the social welfare in some scenarios and provide examples of games with Pareto efficient outcomes in which all the equilibria that result from the inclusion of side payments are inefficient. 
Sauerberg and Oesterheld~\cite{Sauerberg24:Computing} study a model that is closely related to that of Jackson and Wilkie, but in which players simultaneously commit to side payments while playing their action in the underlying game (as opposed to playing an action after learning the commitments of other players),
and Gupta and Schewe consider a similar model for two-player games, in which the first player can commit both to an action in the underlying game and to a \emph{bribe} vector before the second player makes a move. 
Although the differences between their models and ours may seem subtle, the sets of equilibria obtained end up being quite different. 

Guttman~\cite{guttman1978understanding, guttman1987non}, Danzinger and Schnytzer~\cite{danziger1991implementing}, Guttman and Schnytzer~\cite{guttman1992solution}, and Varian~\cite{varian1994sequential, varian1994solution} showed how to reach efficient outcomes in public good games via matching mechanisms, although the types of contracts that the agents could perform in these settings is much more limited (e.g., they could only be proportional to the amount of actions played of a given kind.)

Another closely related line of research is that of {\em disarmament games}.  These also involve iterative commitments, but instead of committing to transfers, in these the players directly remove parts of their own strategy spaces. As in our setting, participating nations cannot cut their army size too drastically since otherwise they can get attacked by the others. However, Deng and Conitzer~\cite{Deng17:Disarmament,Deng18:Disarmament} showed that they can reach the desired disarmament goals by iteratively committing to small reductions.

Regarding committing to payments to achieve certain outcomes in a game, there are a variety of directions along this line.  First, of course, in mechanism design it is common for payments to be specified so that players report their valuations truthfully.  But in mechanism design the original game is not arbitrary; we get to specify it or, equivalently, we can think of it as being applied to highly specific games. For instance, given the Bayesian game where an allocation is always chosen to maximize the sum of the agents' reported valuations, the VCG mechanism~\cite{vickrey1961counterspeculation, clarke1971multipart, groves1973incentives} is a way to specify payments to make truthful reporting optimal. Also, the principal--agent literature \cite{Lambert1986, Demski1987, Stoughton1993, Barron2003, Feess2004, Gromb2007, Malcomson2009} studies settings in which one player (the principal) can commit to pay another player (the agent) in a way that depends on what action the agent takes. However, payments typically go only one way and this literature focuses on issues absent from our setting, such as imperfect observability of the agent's action or differing risk preferences between agent and principal.
Meanwhile, other results involve commitments to outcome-dependent payments for arbitrary games.  In Monderer and Tennenholtz's $k$-implementation paper~\cite{Monderer04:K}, an outside party commits to outcome-dependent payments to the players in the game in order to transform a Nash equilibrium into a dominant strategy equilibrium. Their approach is similar to ours in the sense that these payments are always promised on outcomes outside of the resulting equilibrium, which means that the payer ultimately does not have to put her own resources on the line. In later work on {\em internal implementation}~\cite{Anderson10:Internal}, the party with the ability to commit to such payments is modeled as one of the players in the game.

Many ways of relaxing the traditional concept of Nash equilibrium involve the introduction of some kind of {\em mediator} that is committed to act in a certain way.
The most common kind of mediator in the literature is involved only by sending {\em signals} to the players.  For example, the concept of correlated equilibrium~\cite{aumann1974subjectivity} can be interpreted as the mediator sending a signal to each player about how to play, generally in a correlated manner~\cite{lehrer2010signaling, arieli2023mediated, arieli2022bayesian}.
This approach also underlies the {\em Bayesian persuasion} framework~\cite{kamenica2011bayesian, arieli2022bayesian}.
A more general version of this approach - which is also closer to our model - assumes that the players can communicate with the mediator in multiple successive rounds~\cite{forges1986approach,ben2003cheap,abraham2006distributed,abraham2019implementing,abraham2008lower,geffner2023lower,gerardi2004unmediated, geffner2024communication}. However, there are also notions of mediators that can do more than just send signals. For instance, it can be shown that giving the mediator the power to play on the players' behalf~\cite{Rozenfeld07:Routing, Monderer09:Strong} or giving the mediator the power to expand the action set~\cite{geffner2024making} allows the players to converge in equilibria that are resilient to coalitions. We can also view \emph{program equilibrium}~\cite{McAfee1984, Howard1988, Rubinstein1998, Tennenholtz04:Program, Kalai07:Commitment, Oesterheld18:Robust,Critch2019,cooper2024characterising} as an interaction with a mediator that enables players to commit to a strategy and share it with each other. In a way, this is reminiscent of our setting since it requires players to commit (in this case to a strategy instead of side payments) and allows players to converge on more efficient equilibria that would otherwise be impossible. In particular, it can be shown that, under this model, players can reach cooperation on the prisoner's dilemma. 
Similar results can be obtained with mediators that allow simulation of other players~\cite{kovarik2023game}. Following this line of thought, we can view our model as a mediated interaction where the players can submit their side payments to a mediator, and the mediator can both force the players to play an action in the underlying game right after any of them decides to terminate the commitment phase and enforce that these payments are fulfilled after they converge on an outcome. 

The rest of the paper is structured as follows. In Section~\ref{sec:defs} we formally introduce our model and the notation that we will use throughout the paper. We also introduce some basic definitions required to state the main results. These results are presented in Section~\ref{sec:main-results} and we prove the simpler cases in Section~\ref{sec:thm-bot}. We end with a conclusion in Section~\ref{sec:conclusion} and provide the more technical proofs in Appendices~\ref{sec:arb-games}, \ref{sec:2pl-2act}, \ref{sec:thm-main} and \ref{sec:thm-strong}.
}

\section{Basic Definitions and Notation}\label{sec:defs}

\subsection{Normal-form games}

Given a normal-form game $\Gamma = (P, A, U)$, we denote the set of players by $P := \{1,2,\ldots, n\}$, the set of action profiles by $A := A_1 \times \ldots \times A_n$, and the utility function by $U := (u_1, \ldots, u_n)$, where $u_i : A \to \mathbb{R}$ denotes the utility function of player $i$. Unless stated otherwise, we will always label the player $i$'s actions as $A_i := 
\{a_i^1, a_i^2, \ldots, a_i^{N_i}\}$ for some $N_i \in \mathbb{N}$.
Given a strategy profile $\vec{\sigma}$, we denote by $\sigma_i^j$ the probability that action $a_i^j$ is played in $\sigma_i$, by $\sigma_i(A_i)$ the set of actions of player $i$ that are played with non-zero probability in $\sigma_i$, by $u_i(\vec{\sigma})$ the expected utility that player $i$ gets when $\vec{\sigma}$ is played, and by $w(\vec{\sigma})$ the expected social welfare of the players, which is defined as $w(\vec{\sigma}) := \sum_{i = 1}^n u_i(\vec{\sigma})$. We also denote by $w_{\max}(\Gamma)$ the maximum social welfare of the game, which is defined as $w_{\max}(\Gamma) := \sup_{\vec{\sigma} \in \Delta A} \{w(\vec{\sigma})\}.$ Note that, since $w(\vec{\sigma})$ is a weighted average of the social welfare of pure strategy profiles, the maximum is always achieved by a pure strategy profile. Thus, we have that $w_{\max}(\Gamma) := \max_{\vec{a} \in A} \{w(\vec{a})\}$.

Given two normal form games $\Gamma = (P,A,U)$ and $\Gamma' = (P', A', U')$, we define their distance $d(\Gamma, \Gamma')$ as follows:

$$d(\Gamma, \Gamma') := 
\left\{
\begin{array}{ll}
\ \ +\infty & \mbox{if } P \not= P' \mbox{ or } A \not = A'\\
\displaystyle \max_{i \in P, \ \vec{a} \in A} \{|u_i(\vec{a}) - u'_i(\vec{a})|\} & \mbox{otherwise.}
\end{array}
\right.$$

Intuitively, the distance between normal-form games is infinite if they have different structures (i.e., different number of players or different action sets). Otherwise, it is the maximum difference of utilities that a player can receive on a single outcome. Note that this means that we can view the set of normal-form games as a metric space, and therefore as a topological space too. For the rest of the paper, we will be implicitly assuming that the metric and the topology of the space of normal-form games that we consider are the ones induced by the distance $d$ above.

\subsection{$\delta$-commitment games}

Given a normal-form game $\Gamma$ and a strictly positive real number $\delta$, we denote by $\Gamma^\delta$ the \emph{$\delta$-commitment extension} of $\Gamma$. An instance of $\Gamma^\delta$ runs as follows.
\begin{enumerate}
    \item \textbf{Step 1:} Each player decides how much utility it commits to pay for each possible outcome. The recipients of the paid utility can be either other players or $\bot$ (the utility in this case is burned and is not given to other players). The total utility paid in each outcome cannot exceed $\delta$.
    \item \textbf{Step 2:} The payments are made public and the utility functions $u_1, \ldots, u_n$ are updated according to the promised payments.
    \item \textbf{Step 1:} Each player votes to end the commitment phase or not. If all players agree on continuing with the commitment phase, they go back to Step 1. Otherwise, they proceed to Step 4.
    \item \textbf{Step 4:} Each player $i$ plays an action $a_i$ in the resulting game and gets $u_i(a_1, \ldots, a_n)$ utility (note that these utilities are updated every time that Step 1 is performed). The game ends.
\end{enumerate}

In $\Gamma^\delta$, a strategy $\sigma_i$ for player $i$ indicates, given the current state of the game, which payments $i$ should make in each round of the commitment phase, if it should vote to end the commitment phase or not, and what action to play in Step 4. We say that a strategy profile $\vec{\sigma}$ is a \emph{$\delta$-commitment equilibrium} of $\Gamma$ if $\vec{\sigma}$ is a subgame perfect equilibrium of $\Gamma^\delta$. For future reference, we consider also an extension $\Gamma^\delta_\bot$ of $\Gamma$ which is identical to $\Gamma^\delta$ except that players can only burn utility in Step 2 (they cannot transfer it to other players). We say that a strategy profile $\vec{\sigma}$ is a \emph{$(\delta, \bot)$-commitment equilibrium} of $\Gamma$ if $\vec{\sigma}$ is a subgame perfect equilibrium of $\Gamma^\delta_\bot$.

\subsection{Punishable and Non-Degenerate Equilibria}\label{sec:degenerate}

At a high level, our aim in this paper is proving that, given a normal-form game $\Gamma$, we can improve any Nash equilibrium that does not attain maximum social welfare by extending it to a $\delta$-commitment game, for some value $\delta \in \mathbb{R}$. Unfortunately, this result does not quite hold for all games. For instance, consider a normal-form game for two players with the following payoff matrix.

$$\begin{array}{|c||c|c|c|}
\hline
 & A & B & C \\
 \hhline{|=#=|=|=|}
 A & (5,5) & (0, 5) & (0,0) \\
 \hline
 B & (5,0) & (9, 2) & (7,1)\\
 \hline
 C & (0,0) & (1,7) & (6,6) \\
 \hline
\end{array}$$

In this example, $(A,A)$ is a Nash equilibrium that gives $5$ utility to both players and $(C,C)$ is a welfare-maximizing outcome that gives $6$ utility instead, but it is not a Nash equilibrium. Ideally, we would like players to converge on $(C,C)$. However, consider a $\delta$-commitment extension of $\Gamma$ and suppose that player $1$ commits to pay $\delta$ utility to player $2$ in $(A,B)$ and $\delta$ utility to $\bot$ in $(A,A)$. Then, regardless of what player $2$ does, action $B$ will be a dominant action for at least one of the players (it will be for player $1$ unless player $2$ pays $\delta$ utility to player $1$ in $(A,A)$, but in this case it will be for player $2$). This means that, if player $1$ performs this move and then stops the commitment phase right after, she can guarantee that the only Nash equilibrium of the resulting game will be $(B,B)$, which gives her at least $9$ utility. Since player $1$ can never expect to get more than $9$ utility in $(C,C)$ (even with side payments), it is always optimal for player $1$ to run this strategy. This gives $2$ utility to player $2$ and, therefore, the expected utility that player $2$ gets with commitments is strictly less than without commitments.

The issue with the previous example is that player $1$ can make the $(A,A)$ equilibrium disappear in her first move, forcing player $2$ to converge on $(B,B)$. For instance, if $\delta < 0.25$ and $(A,A)$ gave $5.5$ utility to each player instead, players can play a strategy in which they decrease by $\delta$ all utilities except the ones in $(A,A)$ and $(C,C)$ in every step of the commitment phase until $(C,C)$ becomes a Nash equilibrium. If any player does something different, they stop the commitment phase and play $(A,A)$. It is straightforward to check that such a strategy is a subgame perfect equilibrium. Since $(A,A)$ remains an equilibrium even if a player defects, the threat of playing $(A,A)$ persists and deters any player from defecting.

This analysis suggests that, for the commitment process to work as intended, it is necessary that players cannot modify the original equilibrium with a single move. In particular, we require the following notion of equilibrium.

\begin{definition}[Punishable equilibrium]\label{def:robust}
Let $\Gamma = (P,A,U)$ be a normal-form game, $\vec{\sigma}$ be a Nash equilibrium of $\Gamma$, and let $\epsilon$ and $\delta$ be two strictly positive real numbers. We say that $\vec{\sigma}$ is an $(\epsilon, \delta)$-punishable equilibrium if, whenever $d(\Gamma, \Gamma') < \delta$ for some game $\Gamma'$, there exists a Nash equilibrium $\vec{\sigma}'$ of $\Gamma'$ such that $u'_i(\vec{\sigma}') - u_i(\vec{\sigma}) < \epsilon$ for all $i \in P$. If furthermore there exists a Nash equilibrium $\vec{\sigma}'$ that also satisfies that 
$\sigma_i(A_i) = \sigma'_i(A_i)$, we say that $\vec{\sigma}$ is $(\epsilon, \delta)$-strongly punishable. Moreover, we say that $\vec{\sigma}$ is punishable (resp., strongly punishable) if, for all $\epsilon > 0$ there exists a $\delta > 0$ such that $\vec{\sigma}$ is $(\epsilon, \delta)$-punishable (resp., $(\epsilon, \delta)$-strongly punishable).
\end{definition}

Intuitively, a Nash equilibrium $\vec{\sigma}$ is punishable if, after applying small perturbations on the utilities of the game, we can always find an equilibrium $\vec{\sigma}'$ in the new game such that the utilities obtained by the players are not much higher than in $\vec{\sigma}$ (played in the original game).
We will refer to these strategy profiles $\vec{\sigma}'$ as \emph{punishment equilibria}.
Since it is generally hard to determine if a given Nash equilibrium 
is punishable, we will restrict our analysis to equilibria that satisfy a much (computationally) simpler property that implies strong punishability. We call this property \emph{non-degeneration}, and the vast majority of equilibria satisfy it. To introduce this concept and to show why it implies strong punishability, it is better to start with an example. Consider the following game for two players in which $A_1 = A_2 = \{a_1, a_2, a_3\}$ and in which the payoff matrix is as as follows.

$$\begin{array}{|c||c|c|c|}
\hline
 & a_1 & a_2 & a_3 \\
 \hhline{|=#=|=|=|}
 a_1 & (5,5) & (1, 1) & (1,0) \\
 \hline
 a_2 & (1,1) & (5, 5) & (0,1)\\
 \hline
 a_3 & (0,1) & (1,0) & (2,2) \\
 \hline
\end{array}$$

Suppose that we want to determine if there exists a Nash equilibrium $\vec{\sigma}$ such that $\sigma_1(A_1) = \sigma_2(A_2) = \{a_1, a_2\}$. 
Then, $\vec{\sigma}$ must satisfy the following system of equations:

$$
\begin{array}{rcll}
\sigma_1^1 + \sigma_1^2 & = & 1 & \mbox{(Probabilities must add up to 1.)}\\
\sigma_2^1 + \sigma_2^2  &=& 1 & \mbox{(Probabilities must add up to 1.)}\\
u_1(a_1, \sigma_2) - u_1(a_2, \sigma_2) &=& 0 & (\mbox{Player 1 is indifferent between } a_1 \mbox{ and } a_2.)\\
u_1(a_1, \sigma_2) - u_1(a_3, \sigma_2) &\ge& 0 & (\mbox{Player 1 weakly prefers } a_1 \mbox{ to } a_3.)\\
u_2(a_1, \sigma_1) - u_2(a_2, \sigma_1) &=& 0 & (\mbox{Player 2 is indifferent between } a_1 \mbox{ and } a_2.)\\
u_2(a_1, \sigma_1) - u_2(a_3, \sigma_1) &\ge& 0 & (\mbox{Player 2 weakly prefers } a_1 \mbox{ to } a_3.)\\
\end{array}
$$

This system is composed by four linear equations and two linear inequalities. In particular, the four linear equations can be written in matricial form as 
$$\begin{pmatrix}
1 & 1 & 0 & 0\\
4 & -4 & 0 & 0\\
0 & 0 & 1 & 1\\
0 & 0 & 4 & -4
\end{pmatrix}
\begin{pmatrix}
\sigma_1^1 \\
\sigma_1^2 \\
\sigma_2^1 \\
\sigma_2^2
\end{pmatrix}
 = 
 \begin{pmatrix}
1 \\
0 \\
1 \\
0
\end{pmatrix}.
$$

This system gives a unique solution in which $\sigma_1^1 = \sigma_1^2 = \sigma_2^1 = \sigma_2^2 = \frac{1}{2}$. It is easy to check that this solution also strictly satisfies the two inequalities, and therefore that it is indeed a Nash equilibrium. Suppose now that we add small perturbations to the utilities of the players. If these perturbations are small enough and we repeat the same process to find a similar Nash equilibrium, the resulting matrix will still be invertible, its unique solution will still strictly satisfy the two inequalities, and the expected utilities of the players will not be too far off from the original ones. This shows that $\vec{\sigma}$ is a strongly punishable Nash equilibrium.

Note that the two main ingredients that we need in order to argue about the strong punishability of $\vec{\sigma}$ are the fact that (a) the matrix obtained from the system of equations in order to compute $\vec{\sigma}$ is invertible, and (b) that the inequalities are \textbf{strictly} satisfied (i.e., with no equality). We can generalize this idea for any two-player game, and later for multi-player games. Consider a normal-form game $\Gamma = (P, A, U)$ for two players and a Nash equilibrium $\vec{\sigma}$.
Without loss of generality assume that $\sigma_1(A_1) = \{a_1^1, a_1^2, \ldots, a_1^{M_1}\}$ and $\sigma_2(A_2) = \{a_2^1, a_2^2, \ldots, a_2^{M_2}\}$. 
Then, we define the characteristic function of $\vec{\sigma}$ as
$$f_{\vec{\sigma}} (p_1^1, p_1^2, \ldots, p_1^{M_1}, p_2^1, p_2^2, \ldots, p_2^{M_2}) := 
\left \{
\begin{array}{l}
p_1^1 + p_1^2 + \ldots + p_1^{M_1}\\
p_2^1 + p_2^2 + \ldots + p_2^{M_2}\\
u_1(a_1^1, \vec{p}_2) - u_1(a_1^2, \vec{p}_2)\\
u_1(a_1^1, \vec{q}) - u_1(a_1^3, \vec{p}_2)\\
\vdots\\
u_1(a_1^1, \vec{p}_2) - u_1(a_1^{M_1}, \vec{p}_2)\\
u_2(a_2^1, \vec{p}_1) - u_2(a_2^2, \vec{p}_1)\\
u_2(a_2^1, \vec{p}_1) - u_2(a_2^3, \vec{p}_1)\\
\vdots\\
u_2(a_2^1, \vec{p}_1) - u_2(a_2^{M_2}, \vec{p}_1),\\
\end{array}
\right.$$

where we identify $\vec{p}_i$ as the strategy in which player $i$ plays action $1$ with probability $p_i^1$, plays action $2$ with probability $p_i^2$, etc. We also define the residual function of $\vec{\sigma}$ as 

$$r_{\vec{\sigma}} (p_1^1, p_1^2, \ldots, p_1^{M_1}, p_2^1, p_2^2, \ldots, p_2^{M_2}) := 
\left \{
\begin{array}{l}
u_1(a_1^1, \vec{p}_2) - u_1(a_1^{M_1 + 1}, \vec{p}_2)\\
u_1(a_1^1, \vec{p}_2) - u_1(a_1^{M_1 + 2}, \vec{p}_2)\\
\vdots \\
u_1(a_1^1, \vec{p}_2) - u_1(a_1^{N_1}, \vec{p}_2)\\
u_2(a_2^1, \vec{p}_1) - u_2(a_2^{M_2 + 1}, \vec{p}_1)\\
u_2(a_2^1, \vec{p}_1) - u_2(a_2^{M_2 + 2}, \vec{p}_1)\\
\vdots \\
u_2(a_2^1, \vec{p}_1) - u_2(a_2^{N_2}, \vec{p}_1)\\
\end{array}
\right.$$

As in the example, $f_{\vec{\sigma}}$ is a linear function from $[0,1]^{M_1 + M_2}$ to $\mathbb{R}^{M_1 + M_2}$.
It is straightforward to check that condition (a) from the example is equivalent to $|Df_{\vec{\sigma}}(\vec{\sigma})| \not = 0$ (where $|Df_{\vec{\sigma}}(\vec{\sigma})|$ denotes the determinant of the Jacobian of $f_{\vec{\sigma}}$ evaluated at $\vec{\sigma}$), and condition (b) is equivalent to $r_{\vec{\sigma}}(\vec{\sigma}) > 0$ (i.e., all its components are strictly positive when evaluated at $\vec{\sigma}$). This gives the following definition.

\begin{definition}[Non-Degenerate Equilibrium]\label{def:non-degenerate}
If $\Gamma$ is a normal-form game and $\vec{\sigma}$ is a Nash equilibrium of $\Gamma$, we say that $\vec{\sigma}$ is non-degenerate if $|Df_{\vec{\sigma}}(\vec{\sigma})| \not = 0$ and $r_{\vec{\sigma}}(\vec{\sigma}) > 0$.
\end{definition}

Note that we haven't specified that $\Gamma$ must be a two-player game. This is because the construction of $f_{\vec{\sigma}}$ and $r_{\vec{\sigma}}$ can be easily generalized for any number of players. Consider a normal-form game $\Gamma = (P,A, U)$ for $n$ players and a Nash equilibrium $\vec{\sigma}$ for $\Gamma$. 
Suppose that $\sigma_i(A_i) = \{a_i^1, \ldots, a_i^{M_i}\}$.
Then, 

$$f_{\vec{\sigma}} (p_1^1, \ldots, p_n^{M_n}) := 
\left \{
\begin{array}{l}
p_1^1 + p_1^2 + \ldots + p_1^{M_1}\\
p_2^1 + p_2^2 + \ldots + p_2^{M_2}\\
\vdots\\
p_n^1 + p_n^2 + \ldots + p_n^{M_n}\\
u_1(a_1^1, \vec{p}_{-1}) - u_1(a_1^2, \vec{p}_{-1})\\
u_1(a_1^1, \vec{p}_{-1}) - u_1(a_1^3, \vec{p}_{-1})\\
\vdots\\
u_n(a_n^1, \vec{p}_{-n}) - u_n(a_n^{M_n}, \vec{p}_{-n}),\\
\end{array}
\right.$$

where $\vec{p}_{-i}$ denotes the strategy profile in which each player $j \not = i$ plays $a_k^j$ with probability $p_k^j$. We can also define $r_{\vec{\sigma}}$ as  

$$r_{\vec{\sigma}} (p_1^1, \ldots, p_{M_n}^n) := 
\left \{
\begin{array}{l}
u_1(a_1^1, \vec{p}_{-1}) - u_1(a_1^{M_1 + 1}, \vec{p}_{-1})\\
u_1(a_1^1, \vec{p}_{-1}) - u_1(a_1^{M_1 + 2}, \vec{p}_{-1})\\
\vdots \\
u_n(a_n^1, \vec{p}_{-n}) - u_n(a_n^{N_n}, \vec{p}_{-n}).\\
\end{array}
\right.$$

If $n > 2$, the components of both $f_{\vec{\sigma}}$ and $r_{\vec{\sigma}}$ are no longer linear. However, perhaps surprisingly, non-degeneration implies strong punishability even in this case (see Appendix~\ref{sec:thm-strong} for the proof.)

\begin{theorem}\label{thm:strong-robust}
If $\Gamma$ is a normal-form game and $\vec{\sigma}$ is a non-degenerate Nash equilibrium of $\Gamma$, then $\vec{\sigma}$ is strongly punishable.
\end{theorem}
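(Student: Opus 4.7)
The plan is to locate a punishment equilibrium in any nearby game by invoking the implicit function theorem, treating the utility data as a parameter. Writing $U$ for the utility vector of $\Gamma$ and $U'$ for that of a candidate perturbation $\Gamma'$, I define the combined map
\[ F(\vec p, U') := f_{\vec\sigma, U'}(\vec p) - b, \qquad b := (\underbrace{1,\ldots,1}_{n},\underbrace{0,\ldots,0}_{\sum_i (M_i-1)}), \]
where $f_{\vec\sigma, U'}$ denotes the characteristic function built from the support of $\vec\sigma$ but using the utility data $U'$ instead of $U$. The components of $F$ are polynomials in the joint variables $(\vec p, U')$; we have $F(\vec\sigma, U) = 0$ because $\vec\sigma$ satisfies the probability-sum and indifference equations in $\Gamma$; and $D_{\vec p} F(\vec\sigma, U) = Df_{\vec\sigma}(\vec\sigma)$ is invertible by the first half of non-degeneracy.

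The implicit function theorem then yields a smooth map $U' \mapsto \vec p(U')$, defined on a neighborhood of $U$, with $F(\vec p(U'), U') \equiv 0$ and $\vec p(U) = \vec\sigma$. Since $d(\Gamma, \Gamma')$ coincides with an $\ell^\infty$ norm on utility vectors whenever $P$ and $A$ agree, we get $\vec p(U_{\Gamma'}) \to \vec\sigma$ as $d(\Gamma, \Gamma') \to 0$. I would then argue that, for $d(\Gamma, \Gamma')$ small enough, the profile $\vec\sigma'$ obtained by extending $\vec p(U_{\Gamma'})$ by zeros outside the support of $\vec\sigma$ is a bona fide Nash equilibrium of $\Gamma'$ with $\sigma_i'(A_i) = \sigma_i(A_i)$. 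The equality conditions packaged in $F = 0$ supply probability sums and indifference across each player's support. Support matching needs $p_i^j > 0$ for actions in $\sigma_i(A_i)$, which holds by continuity because $\sigma_i^j > 0$ there. The remaining Nash condition, that unsupported actions are not profitable deviations, is exactly $r_{\vec\sigma, U'}(\vec p) > 0$; this is preserved in a neighborhood by continuity of $r$ in $(U', \vec p)$ combined with the strict inequality $r_{\vec\sigma, U}(\vec\sigma) > 0$ provided by the second half of non-degeneracy.

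For the utility bound, fix $\epsilon > 0$. The expected utility $u'_i(\vec q)$ is polynomial in $(\vec q, U')$ and therefore continuous, so $u'_i(\vec\sigma') \to u_i(\vec\sigma)$ along $U' \to U$. I would then pick $\delta > 0$ small enough that, whenever $d(\Gamma, \Gamma') < \delta$, all of the following hold at $\vec p := \vec p(U_{\Gamma'})$: the implicit solution is defined; $p_i^j > 0$ for each $a_i^j$ in the support of $\sigma_i$; $r_{\vec\sigma, U'}(\vec p) > 0$; and $u'_i(\vec\sigma') - u_i(\vec\sigma) < \epsilon$ for every $i$. This $\delta$ witnesses $(\epsilon, \delta)$-strong punishability, and letting $\epsilon$ range over positive reals yields strong punishability.

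The main obstacle is casting the implicit function theorem in the right form: one must allow the utility entries to vary jointly with $\vec p$ so that the local solution really captures the dependence of the equilibrium on the perturbation, rather than treating $\Gamma'$ as fixed. Once this joint-variable formulation is in place, the two non-degeneracy conditions play exactly complementary roles, the determinant condition making the equality system locally invertible and the residual condition keeping the inequality side open under small perturbations, and the rest reduces to routine continuity.
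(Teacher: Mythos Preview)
Your proposal is correct and follows essentially the same approach as the paper: parametrize the characteristic map by the utility vector, apply the implicit function theorem at $(\vec\sigma, U)$ using the determinant hypothesis, and conclude via continuity of the implicit solution and of expected utilities. If anything, your sketch is slightly more careful than the paper's own argument, since you explicitly verify the positivity of the implicit probabilities and the residual inequality $r_{\vec\sigma,U'}(\vec p)>0$, points the paper's proof leaves implicit.
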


\section{Main Results}\label{sec:main-results}

Consider a normal-form game $\Gamma$ for $n$ players and a non-degenerate Nash equilibrium $\vec{\sigma}$ of $\Gamma$. The main result of this paper is showing that, if players have the power to sequentially perform small commitments before the game starts, then, for any utility distribution that (a) achieves maximum social welfare, and (b) Pareto improves the expected utilities of $\vec{\sigma}$, there exists a $\delta$-commitment equilibrium of $\Gamma$ that implements these utilities.

\begin{theorem}\label{thm:main}
Let $\Gamma = (P,A,U)$ be a normal-form game for $n$ players and let $\vec{\sigma}$ be a non-degenerate Nash equilibrium of $\Gamma$. Let $x_1, \ldots, x_n$ be $n$ real numbers such that $\sum_{i = 1}^n x_i = w_{\max}(\Gamma)$ and such that $x_i > u_i(\vec{\sigma})$ for all $i \in P$. Then, for some $\delta > 0$, there exists a $\delta$-commitment equilibrium $\vec{\tau}$ of $\Gamma$ such that $u_i (\vec{\tau}) = x_i$ for all $i \in P$.
\end{theorem}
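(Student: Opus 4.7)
The plan is to construct $\vec{\tau}$ explicitly as a subgame-perfect strategy with two branches: an \emph{on-path} branch in which players execute a pre-planned sequence of small transfers, terminate the commitment phase, and play a pure welfare-maximizing profile $\vec{a}^*$; and an \emph{off-path} branch which, as soon as any deviation occurs, ends the commitment phase at the next opportunity and plays a \emph{punishment equilibrium} of the game reached so far. The key enabling tool is Theorem~\ref{thm:strong-robust}: since $\vec{\sigma}$ is non-degenerate it is strongly punishable, so for every $\epsilon > 0$ any game sufficiently close to $\Gamma$ admits a Nash equilibrium with the same action supports as $\vec{\sigma}$ and expected utilities within $\epsilon$ of $u(\vec{\sigma})$. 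Because $x_i > u_i(\vec{\sigma})$ strictly for every $i$, choosing $\epsilon < \tfrac{1}{2}\min_i(x_i - u_i(\vec{\sigma}))$ guarantees that punishment payoffs remain strictly below each $x_i$ throughout, so every deviator is strictly worse off than on path.

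First I would fix a pure $\vec{a}^*$ with $w(\vec{a}^*) = w_{\max}(\Gamma)$ (one exists by the observation that $w_{\max}$ is attained at a pure profile) and design the total bundle of transfers as follows: at $\vec{a}^*$, redistribute the payoff via inter-player transfers so that player $i$ ends up with exactly $x_i$ (possible since $\sum_i x_i = w(\vec{a}^*)$); and at each neighbouring outcome $(a, \vec{a}^*_{-i})$ with $a \neq a_i^*$, burn enough utility to make $a_i^*$ a strict best response to $\vec{a}^*_{-i}$. The resulting game $\Gamma'$ has $\vec{a}^*$ as a strict Nash equilibrium with payoff profile $(x_1, \ldots, x_n)$.

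With $\epsilon$ as above, Theorem~\ref{thm:strong-robust} yields a $\delta_0 > 0$ such that every game within distance $\delta_0$ of $\Gamma$ admits a suitable punishment equilibrium. I would then fix $\delta < \delta_0$ and split the transfers of the previous step into $K$ rounds, each of total size at most $\delta$, scheduled so that every intermediate game $\Gamma^{(k)}$ stays within the punishment neighbourhood of $\Gamma$ and inherits a punishment equilibrium $\vec{\sigma}^{(k)}$ delivering each player at most $u_i(\vec{\sigma}) + \epsilon < x_i$. Then $\vec{\tau}$ is defined by: on path, follow the schedule, vote to terminate at the designated round, and play $\vec{a}^*$; off path, vote to terminate at the next opportunity and play $\vec{\sigma}^{(k)}$. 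Subgame perfection is then proved by backward induction: at the action stage, $\vec{a}^*$ is a strict best response in $\Gamma'$ and $\vec{\sigma}^{(k)}$ is a Nash equilibrium of $\Gamma^{(k)}$ by construction; at every commitment round, the per-round cap $\delta$ forbids a single-round deviator from pushing the game outside the punishment neighbourhood, so the best continuation they can hope for is a punishment payoff of at most $u_i(\vec{\sigma}) + \epsilon < x_i$, strictly worse than the on-path value $x_i$.

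The main obstacle is scheduling the commitments when $\vec{a}^*$ is played with positive probability under $\vec{\sigma}$. In that case the transfers at $\vec{a}^*$ directly shift the expected payoffs of $\vec{\sigma}$ itself, and the net perturbation accumulated at $\vec{a}^*$ can easily exceed $\delta_0$ long before the schedule finishes, so a naive application of Theorem~\ref{thm:strong-robust} to the whole accumulated perturbation is not enough to keep a punishment equilibrium around. The remedy is to interleave the money-burning commitments on outcomes outside the support of $\vec{\sigma}$ with the transfers at $\vec{a}^*$ in a carefully chosen order, inserting compensating burns inside the support of $\vec{\sigma}$ whenever needed so that the indifference conditions characterising $\vec{\sigma}$ are preserved round-by-round. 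Designing this order, and proving that at every intermediate stage a punishment equilibrium exists with payoffs strictly below the $x_i$ while respecting the $\delta$-cap, is the genuinely technical portion of the argument and is presumably where the appendices' case-by-case analyses come in.
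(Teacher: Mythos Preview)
Your high-level architecture matches the paper's: an on-path schedule of small commitments leading to a welfare-maximizing pure profile $\vec{a}^*$ with the target payoffs, backed by off-path reversion to a punishment equilibrium guaranteed by strong punishability. You also correctly identify the central difficulty and the right idea for its resolution (preserve the indifference conditions of $\vec{\sigma}$ round-by-round via compensating commitments). Two concrete pieces are missing, however, and without them the argument does not close.

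First, your framing ``every intermediate game $\Gamma^{(k)}$ stays within the punishment neighbourhood of $\Gamma$'' cannot work: the cumulative commitments (both the redistribution at $\vec{a}^*$ and the burns at its neighbours) will in general push $\Gamma^{(k)}$ arbitrarily far from $\Gamma$, regardless of whether $\vec{a}^*$ lies in the support of $\vec{\sigma}$. The paper's fix is not to stay near $\Gamma$ but to construct a \emph{continuous path} of games along which $\vec{\sigma}$ remains a non-degenerate Nash equilibrium throughout, and then invoke a compactness lemma (the Heine--Cantor analogue, Lemma~\ref{lemma:compact}) to extract a single $\delta$ that works uniformly along the entire path. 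Your remedy of ``compensating burns to preserve indifference conditions'' is exactly what keeps $\vec{\sigma}$ non-degenerate along the path, but you never explain how a uniform $\delta$ survives once the path leaves the $\delta_0$-ball around $\Gamma$; the compactness step is the missing link.

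Second, the paper does not do the redistribution and the ``make $\vec{a}^*$ an equilibrium'' burns simultaneously. It cleanly splits the protocol into two stages: Stage~1 redistributes utilities at $\vec{a}^{SW}$ (with carefully chosen compensating changes on nearby outcomes so that $f_{\vec{\sigma}}$ is literally invariant---see Proposition~\ref{prop:constructing-path-main}), reaching a game $\Gamma'$ where each player gets exactly $x_i$ at $\vec{a}^{SW}$ and $\vec{\sigma}$ is still a non-degenerate equilibrium strictly Pareto-dominated by $\vec{a}^{SW}$; Stage~2 then simply invokes the already-proved Theorem~\ref{thm:bot} on $\Gamma'$. This decomposition is what lets the paper reuse all the delicate case analysis (partial vs.\ full support, two players vs.\ many, binary actions) rather than redo it with interleaved transfers. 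Your single-stage interleaving would in principle work, but would force you to reprove the content of Theorem~\ref{thm:bot} in a more complicated setting.
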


If players cannot commit to payments between them and they can only burn money, it can be shown that they can still converge on all outcomes that Pareto-improve their utility (even though they may not always exist).

\begin{theorem}\label{thm:bot}
Let $\Gamma = (P,A,U)$ be a normal-form game for $n$ players, let $\vec{\sigma}$ be a Nash equilibrium of $\Gamma$, and let $\vec{a}^{PI} \in A$ be an outcome of $\Gamma$ that strictly Pareto improves $\vec{\sigma}$. Then, there exists $D > 0$ such that, for all $\delta \in (0,D)$, there exists a $(\delta, \bot)$-commitment equilibrium of $\Gamma$ in which the utility of $\vec{a}^{PI}$ is unchanged and players end up playing $\vec{a}^{PI}$ with probability $1$.
\end{theorem}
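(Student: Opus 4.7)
The plan is to have each player gradually commit, in capped per-round increments, to burning money at the outcomes $(a_i', a_{-i}^{PI})$ for $a_i' \neq a_i^{PI}$, until $\vec{a}^{PI}$ becomes a strict Nash equilibrium of the induced game; no burn is ever placed at $\vec{a}^{PI}$, so its payoff profile is preserved. After any deviation during the commitment phase, players immediately terminate the phase and play a specific Nash equilibrium of the resulting game that gives the would-be deviator utility at most $u_i(\vec{a}^{PI})$.

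Concretely, let $\epsilon = \min_i(u_i(\vec{a}^{PI}) - u_i(\vec{\sigma})) > 0$, pick $\eta \in (0, \epsilon)$, and for each player $i$ and each action $a_i' \neq a_i^{PI}$ define the target burn $c_{i, a_i'} := \max(0, u_i(a_i', a_{-i}^{PI}) - u_i(\vec{a}^{PI}) + \eta)$, which is the least amount player $i$ must burn at $(a_i', a_{-i}^{PI})$ so that $a_i^{PI}$ becomes a strict best response to $a_{-i}^{PI}$ in the fully-burned game. Choose $D > 0$ small enough, fix $\delta \in (0, D)$, and let $R = \lceil \max_{i, a_i'} c_{i, a_i'}/\delta \rceil$. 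On path, in each round $k = 1, \dots, R$, every player $i$ burns $c_{i, a_i'}/R \leq \delta$ at every $(a_i', a_{-i}^{PI})$ with $a_i' \neq a_i^{PI}$; at round $R$ all players vote to end the commitment phase, and at the play step each one plays $a_i^{PI}$. Off path, as soon as any player $j$ deviates (in a burn, a vote, or the final action), every player votes to end the phase and at the play step executes a Nash equilibrium of the current game chosen to minimize $j$'s payoff.

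The main obstacle is to exhibit, in every off-path subgame, a Nash equilibrium of the current game giving each potential deviator utility at most $u_i(\vec{a}^{PI})$. The key observation is that burning can only decrease utilities and that a deviator can only burn their own money, so non-deviators' best responses are unchanged and the deviator's best response to any fixed $\sigma_{-i}$ in the deviated game yields at most what it would in the original, namely $u_i(\vec{\sigma}) < u_i(\vec{a}^{PI})$. When $\Pr[\sigma_{-i} = a_{-i}^{PI}] = 0$ for every $i$, the primary burns do not alter $u_i(a_i', \sigma_{-i})$, so $\vec{\sigma}$ itself survives as a Nash equilibrium of every game reached during the protocol and serves as the punishment equilibrium. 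Otherwise, one augments the primary burns with equalizer burns placed at outcomes $(a_i, a_{-i})$ with $a_{-i} \neq a_{-i}^{PI}$, chosen so that $B_i(a_i) := \sum_{a_{-i}} \Pr[\sigma_{-i} = a_{-i}] c_{i, (a_i, a_{-i})}$ is independent of $a_i$; the strict Pareto-improvement condition forces $\Pr[\sigma_{-i} = a_{-i}^{PI}] < 1$ (otherwise $\sigma_{-i}$ would be pure on $a_{-i}^{PI}$, making $a_i^{PI}$ weakly best for $i$ against it and contradicting $u_i(\vec{\sigma}) < u_i(\vec{a}^{PI})$), which leaves enough mass on outcomes $(a_i, a_{-i})$ with $a_{-i} \neq a_{-i}^{PI}$ in the support of $\sigma_{-i}$ to distribute the equalizer burns. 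With this augmentation, $\vec{\sigma}$ remains a Nash equilibrium at every on-path intermediate game, and the observation above that a deviator cannot improve their best-reply value then yields the required punishment equilibrium in every off-path subgame, completing the SPE verification.
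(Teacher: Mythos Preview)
Your on-path construction is reasonable: the primary burns at $(a_i',a_{-i}^{PI})$ eventually make $\vec{a}^{PI}$ a strict equilibrium, and the equalizer burns (placed on outcomes $(a_i,a_{-i})$ with $a_{-i}\neq a_{-i}^{PI}$ in the support of $\sigma_{-i}$) keep $\vec{\sigma}$ a Nash equilibrium of every intermediate on-path game. The real gap is the off-path step. You note correctly that a deviator $i$ can only burn her own utilities, so each $\sigma_j$ with $j\neq i$ is still a best reply to $\sigma_{-j}$, and that $i$'s best-reply \emph{value} against $\sigma_{-i}$ is at most $u_i(\vec{\sigma})$. But these two facts do not produce a Nash equilibrium of the post-deviation game in which $i$ receives at most $u_i(\vec{a}^{PI})$. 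Within her $\delta$ budget the deviator can burn so that $\sigma_i$ ceases to be a best reply to $\sigma_{-i}$; once $i$ must switch to some $\tau_i\neq\sigma_i$, there is no reason $\sigma_j$ remains a best reply to $(\tau_i,\sigma_{-\{i,j\}})$, and the profile can unravel into an equilibrium that is good for $i$. So ``a Nash equilibrium of the current game chosen to minimize $j$'s payoff'' need not give $j$ at most $u_j(\vec{a}^{PI})$, and the SPE verification is incomplete.

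This is exactly the difficulty the paper's machinery is built to address. The paper works with $(\epsilon,\delta)$-\emph{punishable} equilibria: after \emph{any} $\delta$-perturbation of the payoff matrix---in particular the one produced by a single-round deviation---there is still a Nash equilibrium close to $\vec{\sigma}$ in payoffs. Crucially, merely keeping $\vec{\sigma}$ a Nash equilibrium along the on-path trajectory (which is what your equalizer burns accomplish) is not sufficient for punishability; one must also keep $\vec{\sigma}$ \emph{non-degenerate}, i.e.\ $|Df_{\vec{\sigma}}|\neq 0$, and generic equalizer burns alter $Df_{\vec{\sigma}}$. The bulk of the paper's proof---the row-operation protocol for two players, the gradient-preserving $R(j,\vec{x})$ commitments for $n\ge 3$, and the separate $2\times 2$ analysis---is devoted to designing burns that preserve both $f_{\vec{\sigma}}(\vec{\sigma})=0$ \emph{and} $|Df_{\vec{\sigma}}|\neq 0$ along the entire path, after which a compactness argument delivers a uniform $\delta$ and guarantees the required punishment equilibrium at every off-path node.
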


Note that in this setting, even if Theorem~\ref{thm:bot} shows that players can still Pareto improve their utility whenever there exists an outcome that does so, it may happen that all the welfare-maximizing outcomes do not Pareto improve the original equilibrium. 

\subsection{About the Computational Complexity of the Solutions}\label{sec:complexity}

The protocols that we provide to prove Theorem~\ref{thm:main} and Theorem~\ref{thm:bot} require $O(n \cdot \delta^{-1} U_{\max}A_{\max})$ commitment rounds, where $n$ is the number of players, $U_{\max}$ is the maximum utility that a player can get in the original game, and $A_{max} = \max_{i \in P}\{|A_i|\}$. Moreover, given the $\delta$ parameter, our construction is explicit on the equilibrium path, although it relies on finding a punishment equilibrium whenever a player deviates. However, this punishment equilibrium is guaranteed to exist and to have the same support as the original equilibrium $\vec{\sigma}$. Therefore, its computation is reduced to solving a system of multivariable polynomial equations. Note that, even though we are not providing an explicit bound for $\delta$, its existence is usually guaranteed through the application of Theorem~\ref{thm:strong-robust}. Thus, it is possible to obtain a concrete lower bound by backtracking its proof. However, in general, the resulting expression would be quite convoluted and would most likely give a much lower estimate than what is actually required.

The proofs of the main theorems are done in the following order. First we prove Theorem~\ref{thm:bot} in the case where $\vec{\sigma}$ has partial support (Section~\ref{sec:partial-supp}). Then, we prove it for two-player games where $\vec{\sigma}$ has full support and each player has at least three actions (Section~\ref{sec:two-players}), followed by a proof for arbitrary normal-form games (Appendix~\ref{sec:arb-games}) and for two-player games with binary actions (Appendix~\ref{sec:2pl-2act}). We then extend these techniques to prove Theorem~\ref{thm:main} as well (Appendix~\ref{sec:thm-main}). 

\section{Proof of Theorem~\ref{thm:bot}}\label{sec:thm-bot}

For simplicity, we will prove a weaker version of Theorem~\ref{thm:bot} that states that, given $\Gamma$, $\vec{\sigma}$ and $\vec{a}^{PI}$, there exists $\delta > 0$ and a $(\delta, \bot)$-commitment equilibrium of $\Gamma$ in which the utility of $\vec{a}^{PI}$ is unchanged and players end up playing $\vec{a}^{PI}$ with probability $1$. It is easy to check that the same constructions that we present in this section are also valid for all smaller values of $\delta$ as well.

As mentioned in Section~\ref{sec:intro}, the key idea for proving Theorem~\ref{thm:bot} is showing that we can always find a $\delta > 0$ and a commitment protocol $\vec{\pi}^\delta$ in $\Gamma^\delta_\bot$ in which (a) after each commitment step, for all perturbations that modify the entries of the payoff matrix by at most $\delta$,  there exists a punishment equilibrium in the resulting game that gives each player less utility in expectation than $\vec{a}^{PI}$, and (b) eventually $\vec{a}^{PI}$ becomes a Nash equilibrium. If we can construct such a protocol, we can easily extend it to a subgame perfect equilibrium $\vec{\tau}^{\vec{\pi}^\delta}$ in $\Gamma^\delta_\bot$. In this extension, each player $i$ does as follows. At each step of the commitment phase, $i$ runs $\pi^\delta$ to perform its commitments. After the commitment step, if $\vec{a}^{PI}$ is a Nash equilibrium, $i$ votes to stop the commitment phase and plays $a_i^{PI}$. Otherwise, there are two options. If no player defected during the previous commitment step, $i$ votes to continue the commitment phase and repeats this process. Otherwise, $i$ votes to stop the commitment phase and plays its part of the punishment equilibrium described in (a). If a player ever votes to stop the commitment phase before $\vec{a}^{PI}$ becomes a Nash equilibrium, $i$ plays its part of the punishment equilibrium. To check that this is indeed a Nash equilibrium, note that a player can only defect in the following ways:

\begin{itemize}
    \item By defecting from $\pi$ during the commitment phase.
    \item By voting to continue the commitment phase when it should vote to stop.
    \item By voting to stop when it should vote to continue the commitment phase.
    \item By playing a different action after the commitment phase.
\end{itemize}

If $i$ defects from $\pi$ during the commitment phase, it guarantees that the remaining players will vote to stop the commitment phase right afterwards and play their part of the punishment equilibrium, which always exists because of property (a). This ensures that $i$ gets a lower utility than in $\vec{\tau}^{\vec{\pi}^\delta}$. This also happens if $i$ votes to stop the commitment phase before $\vec{a}^{PI}$ becomes a Nash equilibrium. If $i$ votes to continue whenever it should vote to stop, all the remaining players will vote to stop anyways and $i$'s vote will not matter. To see that $i$ does not benefit from playing a different action note that, regardless of what happens during the game, the remaining players will end up playing either their part of the punishment equilibrium or their part of $\vec{a}^{PI}$. Since both are Nash equilibria when played, it is a best response for $i$ to play its part too. Altogether this shows the following.

\begin{lemma}\label{lemma:extension}
If $\pi^\delta$ is a commitment protocol in $\Gamma^\delta_\bot$ that satisfies properties (a) and (b), then it can be extended to a subgame perfect equilibrium $\vec{\tau}^{\vec{\pi}^\delta}$ of $\Gamma^\delta_\bot$.
\end{lemma}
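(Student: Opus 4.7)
The plan is to make precise the informal strategy sketched in the paragraph preceding the lemma, and then verify subgame perfection by checking one-shot deviations at the three stages of $\Gamma^\delta_\bot$. Extend $\pi^\delta$ to $\vec{\tau}^{\vec{\pi}^\delta}$ as follows, recursively over histories $h$: call $h$ \emph{clean} if every move so far agrees with the strategy below. Then each player $i$ is prescribed to (i) submit at each commitment round the transfer specified by $\pi^\delta$; (ii) vote \emph{continue} at a voting stage if $h$ is clean and $\vec{a}^{PI}$ is not yet a Nash equilibrium of the current (updated) game, and vote \emph{stop} otherwise; (iii) at the action stage, play $a_i^{PI}$ if $h$ is clean (so that property~(b) guarantees $\vec{a}^{PI}$ is a Nash equilibrium of the game reached), and otherwise play $i$'s component of the punishment equilibrium supplied by property~(a).

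Action-stage and voting-stage deviations are straightforward. In the action stage the prescribed profile is by construction a Nash equilibrium of the current game, so no unilateral action deviation pays. At a voting stage, if any other player is prescribed \emph{stop} then a single vote already terminates the phase and $i$'s vote is immaterial. If all others vote \emph{continue}, then $h$ is clean and $\vec{a}^{PI}$ is not yet a Nash equilibrium; switching to \emph{stop} takes the players off-path and triggers the punishment equilibrium, giving $i$ strictly less than $u_i(\vec{a}^{PI})$ by property~(a), whereas continuing eventually yields $u_i(\vec{a}^{PI})$ by property~(b).

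The main case, and the one I expect to be the obstacle, is a unilateral deviation by $i$ during the commitment stage. Here the $\delta$-cap is essential: the total transfer promised per outcome in a single round is bounded by $\delta$, so any such deviation produces a game within distance $\delta$ of the on-path game at that point. Property~(a), stated for \emph{all} $\delta$-perturbations of the on-path game, still supplies a punishment equilibrium in the deviated game giving every player strictly less than $u_i(\vec{a}^{PI})$. Under $\vec{\tau}^{\vec{\pi}^\delta}$ the other players then terminate and play that punishment equilibrium, so $i$'s continuation payoff after the deviation is strictly less than $u_i(\vec{a}^{PI})$, the payoff she would have obtained by complying. Combining the three cases in a standard one-shot deviation argument, which is valid because by property~(b) the commitment phase terminates after finitely many rounds, certifies that $\vec{\tau}^{\vec{\pi}^\delta}$ is a subgame perfect equilibrium of $\Gamma^\delta_\bot$.
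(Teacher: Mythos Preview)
Your proof is correct and follows essentially the same approach as the paper: define the extension via a ``clean history'' flag, then check the four possible deviation types (commitment, two voting cases, action) using the punishment equilibrium from property~(a) and the Nash property of $\vec{a}^{PI}$ from property~(b). Your explicit invocation of the one-shot deviation principle, justified by the finite termination in~(b), and your observation that a unilateral commitment deviation keeps the game within distance $\delta$ of the on-path game are slightly more careful than the paper's informal discussion, but the argument is the same.
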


In some cases (namely the ones with partial support), we will design commitment protocols that satisfy a property that is slightly stronger than (a). Let $L_{\vec{\sigma}}^{\vec{a}^{PI}}$ be the minimum additional utility that a player gets from playing $\vec{a}^{PI}$ instead of playing $\vec{\sigma}$. More precisely, $$L_{\vec{\sigma}}^{\vec{a}^{PI}} := \min_{i \in P} \{u_i(\vec{a}^{PI}) - u_i(\vec{\sigma})\}.$$

We define the property (a1) as follows.

\begin{itemize}
    \item [(a1)] $\vec{\sigma}$ remains a $\left(L_{\vec{\sigma}}^{\vec{a}^{PI}}, \delta\right)$-strongly punishable Nash equilibrium after each commitment step.
\end{itemize}

It is easy to check that (a1) implies (a). In fact, if a commitment protocol satisfies (a1), it means that $\vec{\sigma}$ is a Nash equilibrium throughout the whole process. Moreover, since the utilities on given outcomes can only decrease, the expected utilities of the players when playing $\vec{\sigma}$ can only decrease as well. This means that the maximum utility that a player $i$ can get is bounded by $u_i(\vec{\sigma}) + L_{\vec{\sigma}}^{\vec{a}^{PI}}$, which by definition is lower than $u_i(\vec{a}^{PI})$. This gives the following Corollary.

\begin{corollary}\label{cor:protocol}
If $\pi^\delta$ is a commitment protocol in $\Gamma^\delta_\bot$ that satisfies properties (a1) and (b), then it can be extended to a subgame perfect equilibrium $\vec{\tau}^{\vec{\pi}^\delta}$ of $\Gamma^\delta_\bot$.
\end{corollary}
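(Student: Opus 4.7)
The plan is to derive the corollary directly from Lemma~\ref{lemma:extension} by arguing that property (a1) is strictly stronger than property (a); most of the pieces already appear in the paragraph immediately preceding the corollary, so the proof should amount to little more than a tidy packaging of that discussion plus one careful bookkeeping step.

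First I would fix an arbitrary commitment history and denote by $\Gamma^{(k)}$ the game induced by all commitments made so far. Because $\Gamma^\delta_\bot$ permits only money burning, every payoff entry in $\Gamma^{(k)}$ has weakly decreased relative to $\Gamma$; in particular $u^{(k)}_i(\vec{\sigma}) \leq u_i(\vec{\sigma})$ for each $i$. Then for any subsequent perturbation $\Gamma'$ with $d(\Gamma^{(k)}, \Gamma') < \delta$ — exactly what one extra commitment round can produce — (a1) supplies a Nash equilibrium $\vec{\sigma}'$ of $\Gamma'$ with the same support as $\vec{\sigma}$ and $u'_i(\vec{\sigma}') < u^{(k)}_i(\vec{\sigma}) + L_{\vec{\sigma}}^{\vec{a}^{PI}}$. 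Chaining these bounds yields $u'_i(\vec{\sigma}') < u_i(\vec{\sigma}) + L_{\vec{\sigma}}^{\vec{a}^{PI}} \leq u_i(\vec{a}^{PI})$, where the last step is the definition of $L_{\vec{\sigma}}^{\vec{a}^{PI}}$. So $\vec{\sigma}'$ will serve as the punishment equilibrium demanded by (a), provided that $u'_i(\vec{a}^{PI}) = u_i(\vec{a}^{PI})$.

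To secure this last equality I would invoke the standing assumption from the opening of Section~\ref{sec:thm-bot}: the protocols we care about, whose existence is the content of Theorem~\ref{thm:bot}, are designed to leave the payoffs at $\vec{a}^{PI}$ unchanged, so no money is ever burned on $\vec{a}^{PI}$. Under this convention $\vec{\sigma}'$ gives every player strictly less than $\vec{a}^{PI}$ in $\Gamma'$, property (a) is verified, and Lemma~\ref{lemma:extension} immediately produces the subgame-perfect extension $\vec{\tau}^{\vec{\pi}^\delta}$.

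The main obstacle — really the only subtlety — is the bookkeeping around what ``less utility than $\vec{a}^{PI}$'' means once the entries at $\vec{a}^{PI}$ could themselves drift. The clever aspect of formulating (a1) is that it bounds punishment payoffs in terms of $u_i(\vec{\sigma})$ in the \emph{original} game rather than in the ever-changing $\Gamma^{(k)}$, which is exactly what lets the chained inequality above stay anchored to the fixed value $u_i(\vec{a}^{PI})$ and slot cleanly into Lemma~\ref{lemma:extension}.
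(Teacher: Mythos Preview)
Your proposal is correct and follows essentially the same route as the paper: show that (a1) implies (a) by chaining the monotone-decrease of $u^{(k)}_i(\vec{\sigma})$ with the $(L_{\vec{\sigma}}^{\vec{a}^{PI}},\delta)$-punishability bound, then invoke Lemma~\ref{lemma:extension}. Your extra bookkeeping about the payoffs at $\vec{a}^{PI}$ is a reasonable elaboration of what the paper leaves implicit, though note that the relevant comparison is with the \emph{on-path} value $u_i(\vec{a}^{PI})$ (unchanged by the protocol), not with $u'_i(\vec{a}^{PI})$ in the post-deviation game; a deviator who burns money on $\vec{a}^{PI}$ only lowers her own entry there, which cannot help her.
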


To see how to construct such a protocol $\pi^\delta$ we will analyze a couple of simple examples that cover the cases in which $\vec{\sigma}$ has partial support, which means that there exist action profiles that are played with probability $0$. Some of the techniques used in these examples will be later extended in Section~\ref{sec:full-support} for arbitrary normal-form games and non-degenerate Nash equilibria.

\subsection{Cases with partial support}\label{sec:partial-supp}

We start this case analysis with a couple of simple examples that can be easily generalized. Consider a normal-form game $\Gamma$ for two players in which the action set of each player is $\{a_1, a_2, a_3, a_4\}$ and the utility matrix is as follows.

\begin{example}\label{ex:simple-case}
$$\begin{array}{|c||c|c|c|c|}
\hline
 & a_1 & a_2 & a_3 & a_4 \\
 \hhline{|=#=|=|=|=|}
 a_1 & (2,2) & (5, 2) & (2 , 5) & (6,0)\\
 \hline
 a_2 & (2,5) & (2, 2) & (5,2) & (0,0)\\
 \hline
 a_3 & (5,2) & (2,5) & (2,2) & (0,0) \\
 \hline
 a_4 & (0,6) & (0,0) & (0,0) & (4,4) \\
 \hline
\end{array}$$
\end{example}

It is easy to check that the strategy profile $\vec{\sigma}$ in which both players choose at random between $a_1, a_2$ and $a_3$ is a non-degenerate Nash equilibrium that gives $3$ utility to each player in expectation. To see this, note that the game obtained by erasing $a_4$ from the action set of both players is equivalent to rock-paper-scissors, that the utility that players get by playing $a_4$ when the other player randomizes between $a_1, a_2$ and $a_3$ is $2$ (which is strictly less than the utility they would get randomizing), and that $\Gamma$ satisfies the properties of Definition~\ref{def:non-degenerate}. Therefore, by Theorem~\ref{thm:strong-robust}, there exists a $\delta > 0$ such that $\vec{\sigma}$ is $(1, \delta)$-strongly punishable, which means that 
any change made to the payoff matrix in which utilties change at most $\delta$ results in a game where there exists a Nash equilibrium with the same support as $\vec{\sigma}$ in which no player gets more than $4$ utility.

Hence, if players want to converge on $(a_4,a_4)$ in $\Gamma^\delta_\bot$, they can make commitments following the diagram below and play $(a_4, a_4)$ afterwards.

\begin{center}
\includegraphics[scale=0.5]{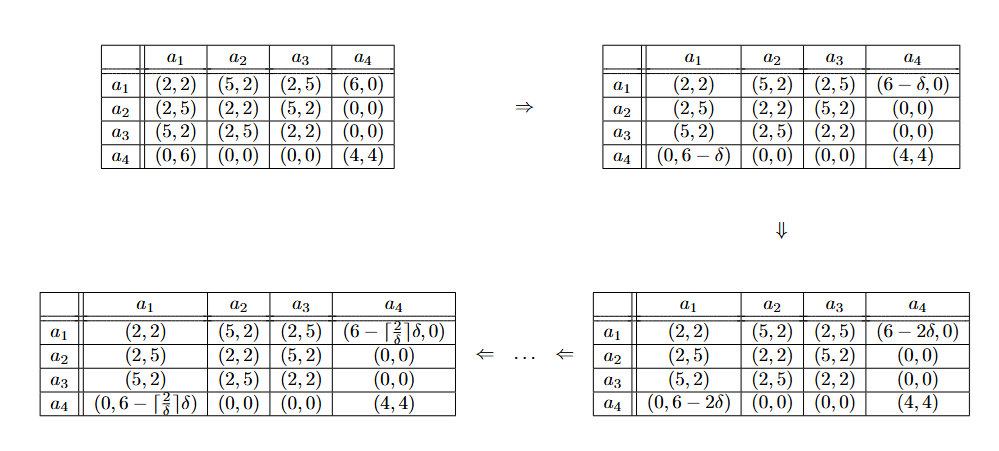}
\end{center}

Since $f_{\vec{\sigma}}$ is preserved in each commitment step and $r_{\vec{\sigma}}$ decreases, we can deduce that $\vec{\sigma}$ satisfies property (a1). Moreover, it satisfies property (b) by constructions. Therefore, by Corollary~\ref{cor:protocol}, this commitment protocol can be extended to a subgame perfect equilibrium.

The main insight used for this argument is the fact that decreasing the values outside of the support of $\vec{\sigma}$ and $\vec{a}^{PI}$ results in a game that satisfies (a1) and (b). This can be generalized to any normal-form game in which $\vec{a}^{PI}$ satisfies that $\sigma_i^{PI} = 0$ for all players $i$ (i.e., when the probability of playing $a_i^{PI}$ with $\sigma_i$ is $0$ for all $i \in P$). In this case, during each commitment step, each player $i$ should decrease by $\delta$ the utility of all outcomes in $(A \setminus \{a_i^{PI}\}) \times \{\vec{a}_{-i}^{PI}\}$ until $\vec{a}^{PI}$ becomes a Nash equilibrium. This is illustrated in the diagram below, where $-\infty$ represents a sufficiently negative value.

\begin{center}
\includegraphics[scale=0.7]{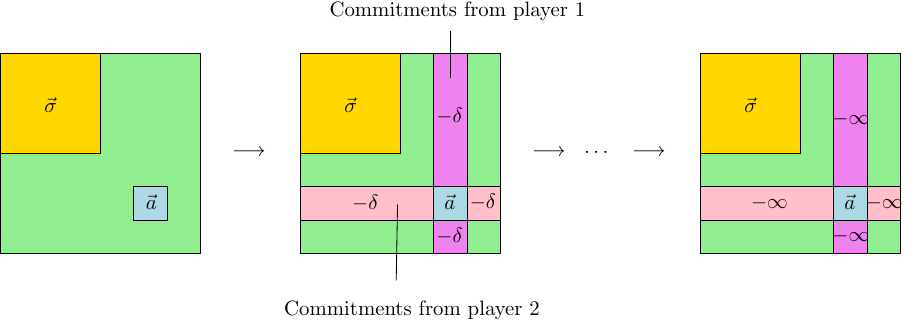}
\end{center}

A slightly more complicated case occurs when $\vec{a}^{PI}$ is not in the support of $\vec{\sigma}$ but, for some player $i$, $a_i^{PI}$ is in the support of $\sigma_i$. For example, consider a small variant of Example~\ref{ex:simple-case}.

\begin{example}\label{ex:simple-case2}
$$\begin{array}{|c||c|c|c|c|}
\hline
 & a_1 & a_2 & a_3 & a_4 \\
 \hhline{|=#=|=|=|=|}
 a_1 & (2,2) & (5, 2) & (2 , 5) & (2,2)\\
 \hline
 a_2 & (2,5) & (2, 2) & (5,2) & (0,0)\\
 \hline
 a_3 & (5,2) & (2,5) & (2,2) & (1,1) \\
 \hline
 a_4 & (2,2) & (2.5,2) & (4,4) & (0,0) \\
 \hline
\end{array}$$
\end{example}

As in Example~\ref{ex:simple-case}, the strategy profile $\vec{\sigma}$ in which each player chooses between $a_1, a_2$ and $a_3$ uniformly at random is a Nash equilibrium. However, if players want to converge on $(a_4, a_3)$ they must be careful. Taking the same approach as in Example~\ref{ex:simple-case} could potentially ``break'' $\vec{\sigma}$ since some of the commitments would lie in its support. Nevertheless,  the following commitment preserves $f_{\vec{\sigma}}$.

\begin{center}
\includegraphics[scale = 0.6]{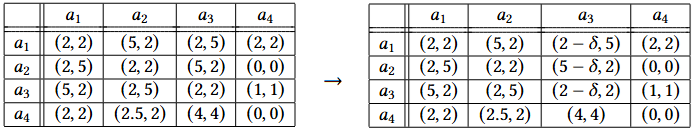}
\end{center}

Indeed,  the same approach as in Example~\ref{ex:simple-case} would work if it was not the case that $r_{\vec{\sigma}}$ increases with each iteration. In fact, if $\delta > 0.5$ we would have that $u_1(a_4, \sigma_2) > u_1(\vec{\sigma})$ and therefore that $\vec{\sigma}$ is no longer a Nash equilibrium after just one iteration. To fix this issue, the players should first start by decreasing all their utilities outside of the support of $\vec{\sigma}$ except for $(a_4, a_3)$. This way, they decrease all the components of $r_{\vec{\sigma}}$ to the point in which $r_{\vec{\sigma}}$ remains negative after running the whole procedure used in Example~\ref{ex:simple-case}. For instance, in Example~\ref{ex:simple-case2}, it suffices that player $1$ decreases by $1$ the value of $u_1(a_4, a_1), u_1(a_4, a_2)$ and $u_1(a_4, a_4)$ (this may not be done necessarily in a single step) before running the procedure used in Example~\ref{ex:simple-case}.

In short, the commitment protocol in this case goes as follows.

\begin{enumerate}
    \item First, each player $i$ decreases its utility in all outcomes outside of $\vec{a}^{PI}$ and outside of the support of $\vec{\sigma}$. It repeats this process until all such values are sufficiently low.
    \item Then, each player $i$ decreases its utility in all outcomes in $(A_i \setminus \{a_i\}) \times \vec{a}_{-i}$.
\end{enumerate}

An analogous reasoning to the one used in the analysis of Example~\ref{ex:simple-case} shows that this protocol satisfies properties (a1) and (b), and therefore it can also be extended to a subgame perfect equilibrium. In the case of two players, this is illustrated in the diagram below.

\begin{center}
\includegraphics[scale=0.7]{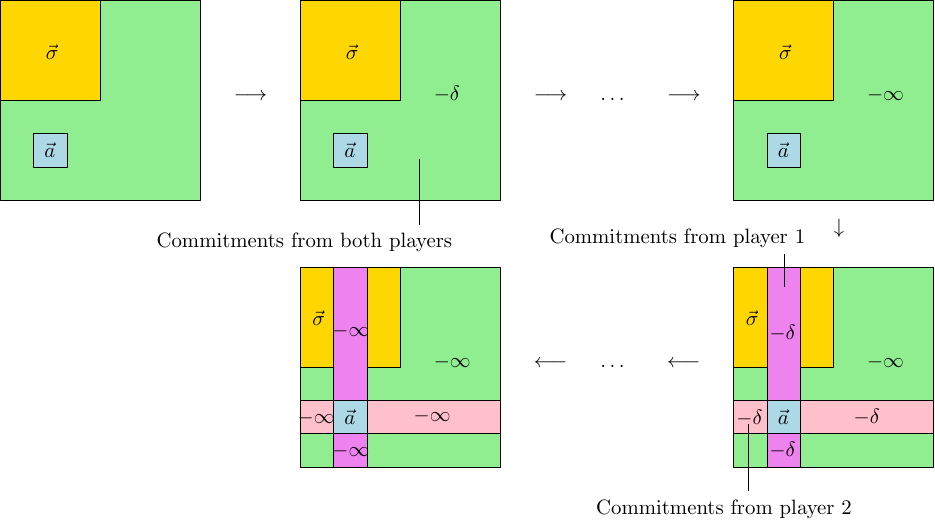}
\end{center}

The remaining case is the one in which $\vec{a}^{PI}$ is in the support of $\vec{\sigma}$, but $\vec{\sigma}$ does not have full support. Assume without loss of generality that $|A_i| \ge 2$ for all players $i$ (we can view this as ignoring all the players that cannot choose). With this assumption, there exists an action profile $\vec{a}$ such that $a_i \not = a_i^{PI}$ for all players $i$ and such that $\vec{a}$ is not in the support of $\vec{\sigma}$. Then, players can commit as follows.

\begin{enumerate}
    \item First, they decrease their utilities in $\vec{a}$ until $\vec{a}^{PI}$ strictly Pareto-dominates $\vec{a}$. 
    \item Then, they run the protocol used in the analysis of Example~\ref{ex:simple-case2} in order to make $\vec{a}$ a Nash equilibrium while satisfying properties (a1) and (b). During the second part of the protocol, each player $i$ decreases the utilities in $(A_i \setminus \{a_i^{PI}\}) \times \vec{a}_{-i}^{PI}$ until $u_i(\vec{a}) > u_i(a'_i, \vec{a}_{-i}) + \delta$ for all players $i$ and all actions $a'_i \in A_i$.
    \item After terminating the protocol above, they run again the protocol in order to make $\vec{a}^{PI}$ a Nash equilibrium, but taking $\vec{a}$ as input instead of $\vec{\sigma}$ (note that $\vec{a}$ is a Nash equilibrium at this point.)
\end{enumerate}

Note that this protocol does not satisfy property (a1). However, it still satisfies properties (a) and (b). It clearly satisfies (b) by construction. To see that it satisfies (a), note that players can use $\vec{\sigma}$ as a punishment equilibrium during the first two steps, and they can use $\vec{a}$ during the third one. This proves Theorem~\ref{thm:bot} in the cases in which $\vec{\sigma}$ does not have full support.

\subsection{Cases with Full Support}\label{sec:full-support}

The case in which $\vec{\sigma}$ has full support is considerably harder than the previous cases. The main issue is that players are forced to modify $f_{\vec{\sigma}}$ in each commitment step and, under these circumstances, it is difficult to design a protocol that satisfies (a1). In order to illustrate the main obstacles, suppose that $\vec{\sigma}$ is a $(L_{\vec{\sigma}}^{\vec{a}^{PI}}, \delta)$-punishable Nash equilibrium. If we wanted to design a commitment protocol $$\Gamma = \Gamma_0 \rightarrow \Gamma_1 \rightarrow \Gamma_2 \rightarrow \ldots \rightarrow \Gamma_k$$ that satisfies (a1), it is necessary that

\begin{itemize}
    \item [(P1)] $d(\Gamma_i, \Gamma_{i+1}) < \delta$ for all $i < k$ (using the metric defined in Section~\ref{sec:defs}).
    \item [(P2)] $u_i^j(\vec{a}) \ge u_i^{j+1}(\vec{a})$ for all $i \in P$ and all $j < k$, where $u_i^j$ denotes the utility of player $i$ in $\Gamma_j$.
    \item [(P3)] $u_i(\vec{a}^{PI}) = u_i^k(\vec{a}^{PI})$ for all $i \in P$. 
    \item [(P4)] $\vec{\sigma}$ is a non-degenerate Nash equilibrium of $\Gamma_i$ for all $i \le k$.
    \item [(P5)] $\vec{a}^{PI}$ is a Nash equilibrium of $\Gamma_k$.
\end{itemize}

Designing such a commitment protocol does not seem that complicated since it basically reduces to finding a way to decrease the coefficients of $f_{\vec{\sigma}}$ outside of $\vec{a}^{PI}$ by at most $\delta$, in each step such that $f_{\vec{\sigma}}(\vec{\sigma}) = \vec{0}$ and $|Df_{\vec{\sigma}}(\vec{\sigma})| \not = 0$ throughout the whole process. Unfortunately, this is not enough for our purposes. In fact, the non-degeneration property guarantees that, in each game $\Gamma_i$, there exists a value $\delta_i > 0$ such that $\vec{\sigma}$ is $(L_{\vec{\sigma}}^{\vec{a}^{PI}}, \delta_i)$-strongly punishable. However, $\vec{\sigma}$ may not necessarily be $(L_{\vec{\sigma}}^{\vec{a}^{PI}}, \delta)$-strongly punishable. We can try to fix the protocol by setting $\delta$ to the minimum of all such $\delta_i$, but then some of the payments involved in the process may become too large.

They key idea to circumvent this issue is making the distance between $\Gamma_i$ and $\Gamma_{i+1}$ ``infinitesimally small''. This means constructing a path that goes from the original game $\Gamma = (P, A, U)$ to a new game $\Gamma' = (P, A, U')$ in which the following conditions are satisfied:

\begin{itemize}
    \item [(P1')] $\gamma$ is continuous (using the distance defined in Section~\ref{sec:defs} as the metric) and finite.
    \item [(P2')] The utility of the players (weakly) decreases through $\gamma$.
    \item [(P3')] $u_i(\vec{a}^{PI}) = u'_i(\vec{a}^{PI})$ for all $i \in P$ (note that this, combined with the property above, implies that the utilities in outcome $\vec{a}^{PI}$ are constant in $\gamma$.)
    \item [(P4')] For each game $\Gamma'' \in \gamma$, $\vec{\sigma}$ is a non-degenerate Nash equilibrium of $\Gamma''$. 
    \item [(P5')] $\vec{a}^{PI}$ is a Nash equilibrium of $\Gamma'$.
\end{itemize}

The next proposition shows that these conditions are sufficient for our purposes.

\begin{proposition}\label{prop:curve}
If there exists a path $\gamma$ that satisfies (P1'), (P2'), (P3'), (P4') and (P5'), then, for some $\delta > 0$, it can be extended to a subgame perfect Nash equilibrium in $\Gamma_\bot^\delta$ in which players end up playing $\vec{a}^{PI}$. 
\end{proposition}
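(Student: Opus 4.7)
The plan is to turn the continuous path $\gamma$ into a discrete commitment protocol in $\Gamma^\delta_\bot$ and then invoke Corollary~\ref{cor:protocol}. Concretely, I would partition $\gamma$ into finitely many games $\Gamma = \Gamma_0, \Gamma_1, \ldots, \Gamma_k = \Gamma'$, with each transition $\Gamma_j \to \Gamma_{j+1}$ implemented in a single commitment round in which each player $i$ burns exactly $u^j_i(\vec{a}) - u^{j+1}_i(\vec{a})$ at every outcome $\vec{a}$. By (P2') these quantities are non-negative, so only burning is needed (which is what $\Gamma^\delta_\bot$ permits), and by (P3') no payment is ever scheduled on $\vec{a}^{PI}$. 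After round $k$, the resulting game is $\Gamma'$, where $\vec{a}^{PI}$ is a Nash equilibrium by (P5'), so the players vote to terminate and play $\vec{a}^{PI}$. This gives property~(b) of Corollary~\ref{cor:protocol} by construction.

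The delicate ingredient is property~(a1), which demands a single $\delta$ that is simultaneously (i) small enough that $\vec{\sigma}$ remains $(L_{\vec{\sigma}}^{\vec{a}^{PI}}, \delta)$-strongly punishable in \emph{every} $\Gamma_j$ along the honest path, and (ii) large enough that the per-round burns fit under the cap. By (P4'), $\vec{\sigma}$ is non-degenerate at every game on $\gamma$; Theorem~\ref{thm:strong-robust} then supplies strong punishability pointwise, but with a radius that in principle depends on the game. The main obstacle is to upgrade this to a uniform radius $\delta_0 > 0$ valid along all of $\gamma$. I would obtain it via compactness: the image of $\gamma$ is compact by (P1'), and the non-degeneration witnesses $|Df_{\vec{\sigma}}(\vec{\sigma})|$ and the components of $r_{\vec{\sigma}}(\vec{\sigma})$ are continuous functions of the game's utilities that are everywhere strictly positive on this compact set. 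Hence they attain positive minima, which, when plugged into the proof of Theorem~\ref{thm:strong-robust} (which ultimately rests on the implicit function theorem applied to $f_{\vec{\sigma}} = 0$ together with the strict inequalities recorded in $r_{\vec{\sigma}}$), yield the desired uniform $\delta_0$.

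With $\delta_0$ in hand, fix any $\delta \in (0, \delta_0)$ and use uniform continuity of $\gamma$ to choose a discretization fine enough that, at every outcome $\vec{a}$ and every index $j$, the total decrease $\sum_{i \in P} (u^j_i(\vec{a}) - u^{j+1}_i(\vec{a}))$ is bounded by $\delta$, so that each transition fits under the $\delta$-commitment cap. Because every intermediate $\Gamma_j$ lies on $\gamma$, the uniform punishability established above yields property~(a1) for this $\delta$. Applying Corollary~\ref{cor:protocol} to the commitment protocol $\pi^\delta$ defined by these burns then extends it to a subgame perfect equilibrium $\vec{\tau}^{\vec{\pi}^\delta}$ of $\Gamma_\bot^\delta$ whose equilibrium-path play terminates in $\vec{a}^{PI}$, as required.
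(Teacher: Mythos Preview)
Your proposal is correct and follows essentially the same approach as the paper: use compactness of the image of $\gamma$ (from (P1')) together with the pointwise non-degeneration (P4') and Theorem~\ref{thm:strong-robust} to upgrade to a \emph{uniform} punishability radius, then discretize $\gamma$ into steps of size at most $\delta$ and run the standard punishment argument. Two minor differences in exposition: (i) the paper isolates the uniform-punishability step as Lemma~\ref{lemma:compact} (a Heine--Cantor/finite-subcover argument), which you could simply cite instead of sketching the alternative ``minima of $|Df_{\vec{\sigma}}(\vec{\sigma})|$ and $r_{\vec{\sigma}}(\vec{\sigma})$ plugged into the IFT'' route; and (ii) the paper writes out the SPE verification directly rather than invoking Corollary~\ref{cor:protocol}, but your reduction to that corollary is entirely legitimate. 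One small cosmetic point: $|Df_{\vec{\sigma}}(\vec{\sigma})|$ is nonzero but not necessarily positive, so it is its absolute value that attains a positive minimum on the compact image.
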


Before proving Proposition~\ref{prop:curve} we need the following compactness lemma, which is an analogue of the Heine-Cantor theorem for normal-form games.

\begin{lemma}\label{lemma:compact}
Let $X_n^A$ be the metric space of normal-form games for $n$ players with action set $A$, where the distance between two elements is the one given in Section~\ref{sec:defs}. Let $\vec{\sigma}$ be a strategy profile over $A$ and let $C \subseteq X_n^A$ be a compact set such that $\vec{\sigma}$ is a non-degenerate Nash equilibrium for all $\Gamma \in C$. Then, for all $\epsilon > 0$, there exists $\delta > 0$ such that $\vec{\sigma}$ is $(\epsilon, \delta)$-strongly punishable in all games $\Gamma \in C$. 
\end{lemma}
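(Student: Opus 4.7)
The plan is to argue by contradiction via a Heine--Cantor-style compactness argument, bootstrapping from the pointwise guarantee of Theorem~\ref{thm:strong-robust}. Suppose toward contradiction that no uniform $\delta$ works. Then for every $k \in \mathbb{N}$ there is a game $\Gamma_k \in C$ for which $\vec{\sigma}$ is not $(\epsilon, 1/k)$-strongly punishable; unpacking Definition~\ref{def:robust}, this produces a perturbed game $\Gamma'_k$ with $d(\Gamma_k, \Gamma'_k) < 1/k$ such that no Nash equilibrium $\vec{\tau}$ of $\Gamma'_k$ with $\tau_i(A_i) = \sigma_i(A_i)$ for all $i$ can make $u'_{k,i}(\vec{\tau}) - u_{k,i}(\vec{\sigma}) < \epsilon$ hold for every player simultaneously.

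Next I would invoke compactness of $C$ to pass to a subsequence along which $\Gamma_k \to \Gamma^* \in C$; since $d(\Gamma'_k, \Gamma_k) < 1/k \to 0$, the triangle inequality for $d$ forces $\Gamma'_k \to \Gamma^*$ as well. Expected utility is linear in the payoff entries and hence continuous in the metric $d$, so $u_{k,i}(\vec{\sigma}) \to u^*_i(\vec{\sigma})$ coordinatewise. Because $\vec{\sigma}$ is non-degenerate in $\Gamma^*$ by the hypothesis on $C$, Theorem~\ref{thm:strong-robust} produces a $\delta^* > 0$ with respect to which $\vec{\sigma}$ is $(\epsilon/2, \delta^*)$-strongly punishable in $\Gamma^*$. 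Choose $k$ large enough that both $d(\Gamma'_k, \Gamma^*) < \delta^*$ and $|u_{k,i}(\vec{\sigma}) - u^*_i(\vec{\sigma})| < \epsilon/2$ for every $i$; strong punishability at $\Gamma^*$ then supplies a same-support Nash equilibrium $\vec{\tau}_k$ of $\Gamma'_k$ with $u'_{k,i}(\vec{\tau}_k) - u^*_i(\vec{\sigma}) < \epsilon/2$ for all $i$.

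A single triangle-inequality step closes the contradiction: combining the two estimates gives $u'_{k,i}(\vec{\tau}_k) - u_{k,i}(\vec{\sigma}) < \epsilon$ for every $i$, which is precisely what the choice of $\Gamma'_k$ is supposed to forbid. The only delicate point --- and what I would flag as the main place to slip up --- is that strong punishability in $\Gamma_k$ compares candidate payoffs to $\vec{\sigma}$ evaluated in $\Gamma_k$, whereas invoking Theorem~\ref{thm:strong-robust} at the limit $\Gamma^*$ naturally compares them to $\vec{\sigma}$ in $\Gamma^*$. These two reference utility vectors differ by $O(d(\Gamma_k,\Gamma^*)) \to 0$, so the $\epsilon/2 + \epsilon/2$ split survives the swap, but one must keep $u_{k,i}(\vec{\sigma})$ and $u^*_i(\vec{\sigma})$ cleanly distinguished throughout rather than conflating them.
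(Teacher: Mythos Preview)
Your proof is correct and follows essentially the same Heine--Cantor strategy as the paper: both apply Theorem~\ref{thm:strong-robust} pointwise with parameter $\epsilon/2$, use the triangle inequality to shift the reference point from $u_i(\vec{\sigma})$ in one game to $u'_i(\vec{\sigma})$ in a nearby one, and then invoke compactness to upgrade from a pointwise to a uniform $\delta$. The only difference is packaging: the paper argues directly via a finite open subcover of $C$ and sets $\delta$ to the minimum of the finitely many local $\delta^{\Gamma}$'s, whereas you argue by contradiction using sequential compactness and a convergent subsequence.
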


\begin{proof}
Let $\Gamma = (P,A,U)$ be any game in $C$ and let $\epsilon > 0$ be any positive real number. Since $\vec{\sigma}$ is a non-degenerate Nash equilibrium of $\Gamma$, by Theorem~\ref{thm:strong-robust} there exists $\delta > 0$ such that $\vec{\sigma}$ is $(\epsilon/2, \delta)$-strongly punishable. Let $\Gamma' = (P,A,U')$ be any game in $C$ at distance at most $\min(\epsilon, \delta)/2$ from $\Gamma$. Since $d(\Gamma, \Gamma') < \epsilon/2$, we have that $|u'(\vec{\sigma}) - u(\vec{\sigma})| < \epsilon/2$. Moreover, we have that all normal-form games $\Gamma''$ at distance at most $\delta/2$ from $\Gamma'$ are at most at distance $\delta$ from $\Gamma$, and therefore satisfy that there always exists a Nash equilibrium $\vec{\sigma}''$ with the same support as $\vec{\sigma}$ such that $u''_i(\vec{\sigma}'') - u_i(\vec{\sigma}) < \epsilon/2$ for all $i \in P$. By the triangular inequality, this means that $u''_i(\vec{\sigma}'') - u'_i(\vec{\sigma}) < \epsilon$ for all $i \in P$, and therefore that all games $\Gamma'$ in a ball of radius $\min(\epsilon, \delta)/2$ centered at $\Gamma$ satisfy that $\vec{\sigma}$ is a $(\delta/2, \epsilon)$-strongly punishable Nash equilibrium on those games. More generally, this means that, for each game $\Gamma \in C$, there exists an open set $B^\Gamma$ and a positive real number $\delta^\Gamma > 0$ such that $\vec{\sigma}$ is $(\epsilon, \delta^\Gamma)$-punishable on all $\Gamma' \in B^{\Gamma}$. Since $C$ is a compact set and $\{B_\Gamma\}_{\Gamma \in C}$ is an open cover of $C$, there exists a finite sub-cover $\{B_{\Gamma_1}, B_{\Gamma_2}, \ldots, B_{\Gamma_N}\}$. Let $\delta := \min_{i \le N} \{\delta_{\Gamma_i}\}$, we show next that $\vec{\sigma}$ is an $(\epsilon, \delta)$-strongly punishable equilibrium in all games $\Gamma \in C$. Since $\{B_{\Gamma_1}, B_{\Gamma_2}, \ldots, B_{\Gamma_N}\}$ is a cover of $C$, $\Gamma$ must be in $B_{\Gamma_k}$ for some $k \in \{1,2,\ldots,N\}$. Therefore, by definition, $\vec{\sigma}$ is an $(\epsilon, \delta_{\Gamma_k})$-strongly punishable equilibrium, which implies that it also is $(\epsilon, \delta)$-strongly punishable since $\delta \le \delta_{\Gamma_k}$.
\end{proof}

With this we can prove Proposition~\ref{prop:curve}.

\begin{proof}[Proof of Proposition~\ref{prop:curve}]

Suppose that such a path exists.
By property (P4') and Theorem~\ref{thm:strong-robust}
we have that, for every game $\Gamma'' \in \gamma$, there exists $\delta_{\Gamma''} > 0$ such that $\vec{\sigma}$ is a $(L_{\vec{\sigma}}^{a^{PI}}, \delta_{\Gamma''})$-strongly punishable Nash equilibrium of $\Gamma''$. By property (P1') we have that $\gamma$ is closed and bounded, and hence, by the Heine-Borel theorem, it is also a compact set. Therefore, by Lemma~\ref{lemma:compact}, there exists $\delta > 0$ such that $\vec{\sigma}$ is a $(L_{\vec{\sigma}}^{a^{PI}}, \delta)$-strongly punishable equilibrium in all $\Gamma'' \in \gamma$. Note that this means that, in any game $\Gamma'''$ at distance at most $\delta$ from some game $\Gamma'' \in \gamma$, there exists a Nash equilibrium $\vec{\sigma}'$ in which each player gets at most $L_{\vec{\sigma}}^{a^{PI}}$ additional utility than the one they would get by playing $\vec{\sigma}$ in $\Gamma''$. By property (P2'), we have that this utility is bounded by $u_i(\vec{\sigma}) + L_{\vec{\sigma}}^{a^{PI}}$, which by definition is at most $u'_i(\vec{\sigma})$.

This property ensures that the following strategy is a subgame perfect equilibrium in $\Gamma^\delta_\bot$. During each step of the commitment phase, each player commits to pay in such a way that the resulting game remains in the path $\gamma$. Intuitively, we can view this as if players perform small leaps over $\gamma$ until they reach $\Gamma'$. If a player ever defects from the proposed strategy, all players should immediately vote to terminate the commitment phase and play the punishment equilibrium. If no one defects and they eventually reach $\Gamma'$, they play $\vec{a}^{PI}$. It is easy to check that this strategy profile is indeed a subgame perfect equilibrium. If a player $i$ defects from the main strategy, it guarantees that the remaining players will vote to terminate and that the remaining players will play $\vec{\sigma}'$. In this case, it is also a best response for $i$ to vote to terminate (it doesn't change the outcome) and to play $\sigma'_i$ since it is a Nash equilibrium. However, as shown in the previous paragraph, doing this guarantees her at most $u'_i(\vec{\sigma})$ utility which, by property (P3'), is less than the utility that $i$ would get if no one defects. Moreover, if no player defects, playing $a_i^{PI}$ is also a best response since, by property (P5'), $\vec{a}^{PI}$ is a Nash equilibrium of $\Gamma'$.

\begin{center}
\includegraphics[scale=0.7]{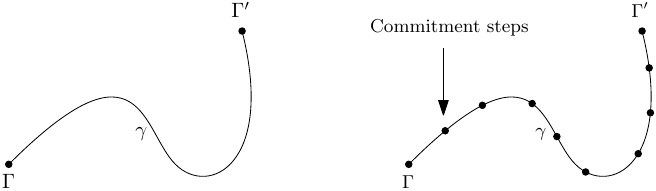}
\end{center}

\end{proof}

The argument above shows that Theorem~\ref{thm:bot} reduces to construct a path $\gamma$ that satisfies the desired conditions. In the remaining part of this section we will show how to construct such a path by concatenating linear segments. 
In order to provide more intuition, We will begin our analysis with two-player games and later generalize it for arbitrary games with any players.

\subsubsection{Two-Player Games}\label{sec:two-players}

Suppose that $\Gamma = (P,A,U)$ is a two-player game and $\vec{\sigma}$ is a non-degenerate Nash equilibrium of $\Gamma$ with full support. Assume without loss of generality that
the target outcome is $(a_1^1, a_2^1)$. Since $\Gamma$ is a two-player game, we have that, after rearranging some of its rows, $Df_{\vec{\sigma}}$ can be written as

$$Df_{\vec{\sigma}} = 
 \left(
    \begin{array}{c|c}
      X_1 & 0_{N \times N}\\
      \hline
      0_{M \times M} & X_2
    \end{array}
    \right)
$$

with 

$$X_1 = 
\begin{pmatrix}
1 & 1 & \ldots & 1 \\
    u_1(a_1^1, a_2^1) - u_1(a_1^2, a_2^1) & u_1(a_1^1, a_2^2) - u_1(a_1^2, a_2^2) &  \ldots & u_1(a_1^1, a_2^{N_2}) - u_1(a_1^2, a_2^{N_2}) \\
    u_1(a_1^1, a_2^1) - u_1(a_1^3, a_2^1) & u_1(a_1^1, a_2^2) - u_1(a_1^3, a_2^2) & \ldots & u_1(a_1^1, a_2^{N_2}) - u_1(a_1^3, a_2^{N_2}) \\
    \vdots &  \vdots & \ddots & \vdots\\
    u_1(a_1^1, a_2^1) - u_1(a_1^{N_1}, a_2^1) & u_1(a_1^1, a_2^2) - u_1(a_1^{N_1}, a_2^2) & \ldots & u_1(a_1^1, a_2^{N_2}) - u_1(a_1^{N_1}, a_2^{N_2}) \\
\end{pmatrix}
$$

and 

$$X_2 = 
\begin{pmatrix}
1 & 1 & \ldots & 1 \\
    u_2(a_1^1, a_2^1) - u_1(a_1^1, a_2^2) & u_2(a_1^2, a_2^1) - u_2(a_1^2, a_2^2) &  \ldots & u_2(a_1^{N_1}, a_2^1) - u_2(a_1^{N_1}, a_2^2) \\
    u_2(a_1^1, a_2^1) - u_2(a_1^1, a_2^3) & u_2(a_1^2, a_2^1) - u_2(a_1^2, a_2^3) & \ldots & u_2(a_1^{N_1}, a_2^1) - u_2(a_1^{N_1}, a_2^3) \\
    \vdots &  \vdots & \ddots & \vdots\\
    u_2(a_1^1, a_2^1) - u_2(a_1^1, a_2^{N_2}) & u_2(a_1^2, a_2^1) - u_2(a_1^2, a_2^{N_2}) & \ldots & u_2(a_1^{N_1}, a_2^1) - u_2(a_1^{N_1}, a_2^{N_2}) \\
\end{pmatrix}.
$$

This means that $|Df_{\vec{\sigma}}| \not = 0$ if and only if
\begin{itemize}
    \item $N = M$.
    \item $X_1$ and $X_2$ are $N \times N$ square matrices such that $|X_1| \not = 0$ and $|X_2| \not = 0$. 
\end{itemize}

The high level idea used for the construction of the path $\gamma$ is the fact that adding a multiple of a row to another row does not change the determinant of the matrix. Following this intuition, our aim is to find a commitment protocol for the players such that each commitment induces a row operation on $X_1$ or $X_2$ such that
\begin{itemize}
\item The row operation consists of adding a multiple of a row to a different row.
    \item The first row (i.e., the row with all entries equal to $1$) is not involved in the operation.
    \item Neither player modifies the utility of $(a_1^1, a_2^1)$.
    \item After a certain number of operations, the lowest entry in the first column of $X_1$ and $X_2$ is increased by a constant amount.
\end{itemize}

Note that, since the equation that corresponds to the first row - i.e., the one that states that the probabilities should add up to $1$ - is the only one that has an independent term, any row operation that does not involve the first row preserves the solution of the linear system. Therefore, the second condition guarantees that $\vec{\sigma}$ remains a Nash equilibrium after each commitment step. The third condition ensures that the utility of both players in outcome $(a_1^1, a_2^1)$ is preserved, and the fourth condition guarantees that, eventually, all entries of the first column of $X_1$ and $X_2$ will be positive, which implies that $a_1$ is eventually becomes a best response to $a_1$ for both players, and thus that $(a_1^1, a_2^1)$ eventually becomes a Nash equilibrium. Thus, if players can manage to commit in such a way that each commitment step performs a row operation on $X_1$ or $X_2$ that satisfies the properties above, we can use these commitments in succession until $(a_1^1, a_2^1)$ becomes a Nash equilibrium in the resulting game. The piece-wise linear path formed by linearly extending these row operations forms the desired path $\gamma$. More precisely, if a player's commitment performs a row operation on $X_1$ or $X_2$ that consists of replacing row $r$ by row $r + c \cdot r'$, where $r'$ is another row of the matrix, then we can connect the original and the resulting matrix with the linear segment defined by $r + \lambda \cdot c \cdot r'$, with $\lambda \in [0,1]$. The first property guarantees that the determinants of $X_1$ and $X_2$ are preserved. In particular, the determinant of $Df_{\vec{\sigma}}$ remains non-zero. 

In order to perform these row operations, consider the following two types of commitments:
\begin{itemize}
    \item $P(i,\vec{a},x)$: Player $i$ decreases her utility by $x$ in outcome $\vec{a}$.
    \item $M(i, \vec{a}, x)$: Player $i$ decreases her utility by $x$ in all outcomes of the form $(a', \vec{a}_{-i})$ with $a' \in A_i \setminus \{a_i\}$.
\end{itemize}

By construction, if $\vec{a} := (a_1^j, a_2^k)$ and $j > 1$, $P(1, \vec{a}, x)$ increases the value of the $(j,k)$-th entry of $X_1$ by $x$ and $M(1, \vec{a}, x)$ decreases its value by $x$ instead (for player $2$, the entry that is modified is $(k,j)$ instead). These two operations are represented in the diagram below.

\begin{center}
\includegraphics[scale=0.7]{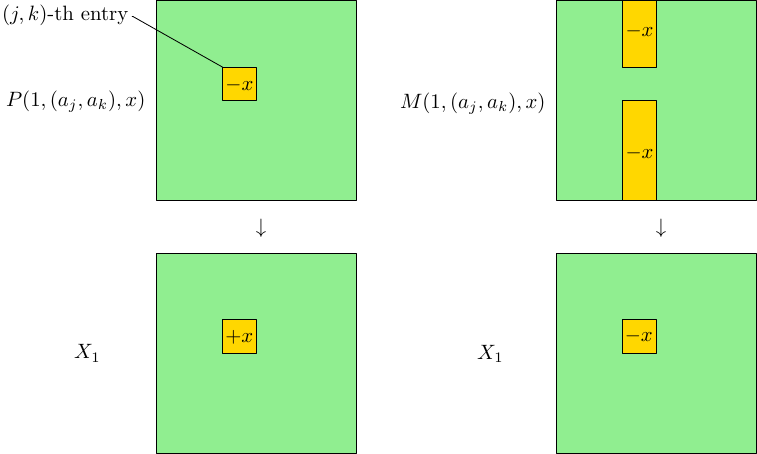}
\end{center}

Combining these types of commitments, players can perform any type of row operation on $X_1$ and $X_2$. In particular, if the number of available actions $N$ satisfies $N > 2$, each player $i$ can run the following protocol.

\begin{enumerate}
    \item \textbf{Step 1:} Check all rows of $X_i$, if there is any row except the first one in which the first entry is strictly positive, add positive multiples of this row to all other rows of $X_i$ except the first one. Repeat this process until all the entries of the first column of $X_i$ are positive.
    \item \textbf{Step 2:} If there is no row in $X_i$ such that the first entry is strictly positive, check if there is one in which the first entry is strictly negative. If it exists, add a negative multiple of this row to all other rows except the first one. Repeat this process until the first entry of all other rows is positive. Then, run Step 1 using any other row as pivot.
\end{enumerate}

Clearly, if all players follow the protocol above, the first columns of $X_1$ and $X_2$ will eventually be non-negative, and therefore $(a_1^1, a_2^1)$ will eventually become a Nash equilibrium of the resulting game. Note that the protocol can be performed in $\Gamma^\delta_\bot$ since players can always choose to add multiples of the rows that are small enough for their coordinates to be bounded by $\delta$ in absolute value. Also note that since we are always adding rows that have a positive first entry, we are not modifying the players' utilities in $(a_1^1, a_2^1)$, as desired. Altogether, this proves Theorem~\ref{thm:bot} in the case of two players and $N > 2$. The case of two players and binary actions is quite special and must be treated separately (see Appendix~\ref{sec:2pl-2act}). The proof for the case of three or more players uses similar techniques and can be found in Appendix~\ref{sec:arb-games}. The high-level idea is that players can commit in a way that they modify the components of $f_{\vec{\sigma}}$ while preserving the image of $\vec{\sigma}$ and the gradient of that component (and therefore the non-degeneration property). Thus, they can slowly converge to the desired outcome by applying these types of transformations. To prove Theorem~\ref{thm:main}, players first commit in a way that results in an outcome that gives all players the desired utility profile, and then they use the techniques of the proof of Theorem~\ref{thm:bot} to make it a Nash equilibrium of the resulting game (see Appendix~\ref{sec:thm-main} for a full proof.)

    \section{Conclusion}\label{sec:conclusion}

Given a normal-form game, we showed that giving players the chance to sequentially commit to small side payments allows them to implement all welfare-maximizing utility profiles that strictly Pareto improve any Nash equilibria of the original game. Moreover, if there already exists an outcome that strictly Pareto improves any of these Nash equilibria, that outcome can be implemented via \emph{money burning}. This aligns with Coase's theorem and comes in contrast to Jackson and Wilkie's analysis, who show that this is not always the case for a single round of unbounded commitments. Two natural questions follow from these results.

First, finding out if these positive results can be extended to a Bayesian setting where players have private information. Most of the techniques used in the construction of the commitment protocol make use of the fact that players have full information about the payoffs of the game. Thus, it would seem that the analysis of the Bayesian setting would require a different approach. 

Second, constructing a more realistic lower bound for $\delta$, even if it holds only for specific classes of games. In Section~\ref{sec:complexity} we provide an upper bound on the number of commitment rounds used in the constructions for Theorems~\ref{thm:main} and Theorem~\ref{thm:bot} as a function of the parameter $\delta$. However, we noted that the lower bound for $\delta$ that we would get with a generic estimate obtained by backtracking the proof of Theorem~\ref{thm:strong-robust} would be in general significantly lower than the actual lower bound for a vast majority of games.

\printbibliography

\appendix

\section{Proof of Theorem~\ref{thm:bot} for Arbitrary Games}\label{sec:arb-games}

The case with several players works differently than the case with only two players presented in Section~\ref{sec:two-players}. The main difference is that $f_{\vec{\sigma}}$ is no longer a linear function, and thus some of the arguments used in the previous section do not apply anymore. In particular, doing a $P(i, \vec{a}, x)$ or a $M(i, \vec{a},  x)$ commitment does not increase or decrease the value of a single entry of $Df_{\vec{\sigma}}$ by $x$. In order to see why, consider the following example.

\begin{example}\label{ex:multivariable}
Consider a game $\Gamma = (P, A , U)$ with $|P| = 3$, $A_1 = A_2 = A_3 = \{0,1\}$ and such that there exists a Nash equilibrium $\vec{\sigma}$ with full support. 
\end{example}
In this example, if we use the notation $d_i(\vec{a}_{-i}) := u_i(0, \vec{a}_{-i}) - u_i(1, \vec{a}_{-i})$, we have that 
$$
f_{\vec{\sigma}} (p_0^1, p_1^1, p_0^2, p_1^2, p_0^3, p_1^3) := 
\left \{
\begin{array}{l}
p_0^1 + p_1^1\\
p_0^2 + p_1^2\\
p_0^3 + p_1^3\\
d_1(0,0) \cdot  p_0^2p_0^3 + d_1(0,1) \cdot  p_0^2p_1^3 + d_1(1,0) \cdot  p_1^2p_0^3 + d_1(1,1) \cdot  p_1^2p_1^3\\
d_2(0,0) \cdot  p_0^1p_0^3 + d_2(0,1) \cdot  p_0^1p_1^3 + d_2(1,0) \cdot  p_1^1p_0^3 + d_2(1,1) \cdot  p_1^1p_1^3\\
d_3(0,0) \cdot  p_0^1p_0^2 + d_3(0,1) \cdot  p_0^1p_1^2 + d_3(1,0) \cdot  p_1^1p_0^2 + d_3(1,1) \cdot  p_1^1p_1^2,\\
\end{array}
\right.
$$

where $p_j^i$ is a variable that denotes the probability that player $i$ plays action $j$. To see how commitments affect $Df_{\vec{\sigma}}$, note that if player $1$ performs a $P(1, (1,0,0), x)$ commitment, the only change in $f_{\vec{\sigma}}$ is that $d_1(0,0)$ increases by $x$. This induces an increment of $(0,0, x \cdot p_0^3, 0, x \cdot p_0^2, 0)$ in the fourth row of $Df_{\vec{\sigma}}$. More generally, it can be checked that, in this example, the $P(i, \vec{a}, x)$ and $M(i, \vec{a}, x)$ commitments modify exactly two values of a single row of $Df_{\vec{\sigma}}$.

Even though this property can make us believe that the desired protocol is much more difficult to construct with respect to the case of two players, it allows us to take the following approach. As in the case of two players, we construct a piece-wise linear path. However, each piece of the path, instead of inducing a row operation on $Df_{\vec{\sigma}}$ (as in the case of two players), it will not modify $Df_{\vec{\sigma}}$ at all. To see how this works, we will introduce this technique with the example above.

However, before going into the details, it will be useful to introduce some notation. First, denote by $f_{\vec{\sigma}}^j$ the $j$th component of $f_{\vec{\sigma}}$. For instance, in Example~\ref{ex:multivariable}, we have that $$f_{\vec{\sigma}}^4 = d_1(0,0) \cdot  p_0^2p_0^3 + d_1(0,1) \cdot  p_0^2p_1^3 + d_1(1,0) \cdot  p_1^2p_0^3 + d_1(1,1) \cdot  p_1^2p_1^3.$$

Using our usual notation where  $A_1 = (a_1^1, a_1^2, a_1^3, \ldots, a_1^{N_1})$, $A_2 = (a_2^1, a_2^2, a_2^3, \ldots, a_2^{N_2})$, $\ldots$, $A_n = (a_n^1, a_n^2, a_n^3, \ldots, a_n^{N_n})$, there exists a natural lexicographical ordering of action profiles. It can be defined recursively as follows:
\begin{itemize}
    \item $a_i^j \prec a_i^k \Longleftrightarrow j < k$
    \item $(a_{i_1}^{j_1}, a_{i_2}^{j_2}, \ldots, a_{i_\ell}^{j_\ell}) \prec (a_{i_1}^{k_1}, a_{i_2}^{k_2}, \ldots, a_{i_\ell}^{k_\ell})$ if and only if $j_1 < k_1$ or $j_1 = k_1$ and $(a_{i_2}^{j_2}, \ldots, a_{i_\ell}^{j_\ell}) \prec (a_{i_2}^{k_2}, \ldots, a_{i_\ell}^{k_\ell})$.
\end{itemize}

This also induces a total order between the coefficients of $f_{\vec{\sigma}}^j$ for $j > n$. All these  coefficients are of the form $(u_i(a_i^1, \vec{a}_{-i}) - u_i(a_i^k, \vec{a}_{-i}))$ for some fixed player $i$ and some fixed action index $k$. Then we can define the following order between coefficients: 

$$(u_i(a_i^1, \vec{a}_{-i}) - u_i(a_i^k, \vec{a}_{-i})) \prec (u_i(a_i^1, \vec{a}'_{-i}) - u_i(a_i^k, \vec{a}'_{-i})) \Longleftrightarrow \vec{a}_{-i} \prec \vec{a}'_{-i}.$$

In Example~\ref{ex:multivariable}, this order is given by the following relation (note that the first action index is $0$ instead of $1$): $$d_i(\vec{a}_{-i}) \prec d_i(\vec{a}'_{-i}) \Longleftrightarrow \vec{a}_{-i} \prec \vec{a}_{-i}'.$$

In particular, in $f_{\vec{\sigma}}^4$, we would have the following order:

$$d_1(0,0) \prec d_1(0,1) \prec d_1(1,0) \prec d_1(1,1).$$

For simplicity, we will assume that the coefficients of $f^i_{\vec{\sigma}}$ are always ordered following this relation. Given a positive integer $k$ and an array of real values $\vec{x} = (x_1, x_2, \ldots)$ with the appropriate dimension, we denote by $R(j, \vec{x})$ the commitment by which a player modifies the coefficients of $f_{\vec{\sigma}}^j$ by $x_1$, $x_2, \ldots$, respectively. More precisely, by performing a $R(j, \vec{x})$ commitment, we increase the first coefficient of $f_{\vec{\sigma}}^j$ by $x_1$, the second one by $x_2$, etc. Note that this commitment can always be performed by adding together commitments of the form $P(i, \vec{a}', x)$ and $M(i, \vec{a}', x)$. For instance, in Example~\ref{ex:multivariable}, if $x_1, x_2, x_3$ and $x_4$ are positive real values, we have that $$
\begin{array}{lll}
R(4, (x_1, -x_2, -x_3, x_4)) & = &  P(1, (1,0,0), x_1) \\
& \oplus & M(1, (1,0,1), x_2)\\
& \oplus & M(1, (1,1,0), x_3) \\
& \oplus & P(1, (1,1,1), x_4),
\end{array}$$

where $\oplus$ indicates the concatenation of commitments (i.e., performing those commitments at the same time). We also denote by $f^j_{\vec{\sigma}} \oplus R(j, \vec{x})$ the resulting function obtained from performing an $R(j, \vec{x})$ commitment on $f^j_{\vec{\sigma}}$.
In Example~\ref{ex:multivariable}, if we set $x_1 = \lambda\cdot p_1^2p_1^3$, $x_2 = \lambda \cdot p_1^2p_0^3$, $x_3 = \lambda \cdot p_0^2p_1^3$ and $x_4 = \lambda \cdot p_0^2p_0^3$, we have that 
\begin{equation}\label{eq:commitment}
\begin{array}{lll}
\nabla (f^4_{\vec{\sigma}} \oplus R(4, (x_1, -x_2, -x_3, x_4)))(\vec{p}) - \nabla f^4_{\vec{\sigma}}(\vec{p}) & = & \lambda \cdot p_2^1p_3^1 \cdot \nabla(p_2^0 p_3^0) \\
& - & \lambda\cdot p_2^1p_3^0 \cdot \nabla(p_2^0 p_3^1) \\
& - & \lambda \cdot p_2^0p_3^1 \cdot \nabla (p_2^1 p_3^0) \\
& + & \lambda \cdot p_2^0p_3^0 \cdot \nabla (p_2^1 p_3^1).
\end{array}
\end{equation}

The factor by which $\lambda$ is multiplied can be expanded as

$$
\begin{array}{lllllllll}
& (& 0, & 0,& p_2^1p_3^1p_3^0, & 0, & p_2^1p_3^1p_2^0, & 0&)\\
+ & (&0,& 0,& -p_2^1p_3^0p_3^1, & 0, & 0, & -p_2^1p_3^0p_2^0&)\\
+ & (&0,& 0,& 0,& -p_2^0p_3^1p_3^0,&  -p_2^1p_3^1p_2^1,&  0&)\\
+ & (&0,& 0,& 0,& p_2^0p_3^0p_3^1,& 0,& p_2^0p_3^0p_2^1&),\\
\end{array}
$$

which shows that 

$$\nabla (f^4_{\vec{\sigma}} \oplus R(4, (x_1, -x_2, -x_3, x_4)))(\vec{p}) - \nabla f^4_{\vec{\sigma}}(\vec{p}) = \vec{0}$$

for all $\lambda \in \mathbb{R}$. Moreover, we also have the following. 

\begin{equation}\label{eq:commitment2}
\begin{array}{lll}
 (f^4_{\vec{\sigma}} \oplus R(4, (x_1, -x_2, -x_3, x_4)))(\vec{p}) - f^4_{\vec{\sigma}}(\vec{p}) & = & \lambda \cdot p_2^1p_3^1 \cdot (p_2^0 p_3^0) \\
& - & \lambda\cdot p_2^1p_3^0 \cdot (p_2^0 p_3^1) \\
& - & \lambda \cdot p_2^0p_3^1 \cdot (p_2^1 p_3^0) \\
& + & \lambda \cdot p_2^0p_3^0 \cdot (p_2^1 p_3^1).
\end{array}
\end{equation}

which is the same as equation~\ref{eq:commitment} but without the gradient operator. It can be easily checked that equation~\ref{eq:commitment2} is also equal to $0$. This means that player $1$ can modify the values of the coefficients of $f_{\vec{\sigma}}^4$ following the linear path defined by $R(4, \lambda\cdot (p_2^1p_3^1, p_2^1p_3^0, p_2^0p_3^1, p_2^0p_3^0))$ without modifying $Df_{\vec{\sigma}}(\vec{p})$ or $f_{\vec{\sigma}}(\vec{p})$. In particular, if we take $\vec{p} = \vec{\sigma}$, we have that player $1$ can modify the coefficients of $f_{\vec{\sigma}}^4$ following this path while simultaneously satisfying that $Df_{\vec{\sigma}}(\vec{\sigma})$ remains constant and that $\vec{\sigma}$ remains a Nash equilibrium in the resulting game. Suppose that $\vec{a}^{PI} = (0,0,0)$. If we choose positive values of $\lambda$, we can also guarantee that the utility of player $1$ in $\vec{a}^{PI}$ remains untouched and that eventually $0$ becomes a best response when the remaining players play $(0,0)$. Having players $2$ and $3$ follow similar commitment patterns gives us the desired path $\gamma$.

We next extend this argument to arbitrary games. Let $\Gamma = (P, A, U)$ with $P = \{1,2,\ldots, n\}$, $n \ge 3$, and $A_i = \{a_i^1, a_i^2, \ldots, a_i^{N_i}\}$ for each $i \in P$. Assume without loss of generality that $|A_i| \ge 2$ for all $i \in P$ and that $\vec{a}^{PI} = (a_1^1, a_2^1, \ldots, a_n^1)$. Our aim is to show that, for each component $f_{\vec{\sigma}}^j$ with $j > n$ (i.e., a component that is not of the form $p_i^1 + p_i^2 + \ldots + p_i^{N_i}$), there always exists an array $\vec{x}^j$ such that
\begin{itemize}
\item [(Q1)] $(f_{\vec{\sigma}}^j \oplus R(j, \vec{x}^j))(\vec{\sigma}) - f_{\vec{\sigma}}^j(\vec{\sigma}) = 0$.
\item [(Q2)] $\nabla (f_{\vec{\sigma}}^j \oplus R(j, \vec{x}^j))(\vec{\sigma}) - \nabla f_{\vec{\sigma}}^j(\vec{\sigma}) = \vec{0}$.
    \item [(Q3)] The first coordinate of $\vec{x}$ is strictly positive. 
\end{itemize}

Suppose that we can indeed always find such an array $\vec{x}^j$. Then, players can construct a path from $\Gamma$ to a game in which $\vec{a}^{PI}$ is a Nash equilibrium by successively performing $R(j, \lambda \cdot \vec{x}^j)$ commitments for different values of $\lambda$ and $j$. By property (Q3), eventually the first coefficient of each component of $f_{\vec{\sigma}}$ becomes positive. When this happens, $\vec{a}^{PI}$ becomes a Nash equilibrium of the resulting game (note that we are assuming that $\vec{a}^{PI} = (a_1^1, a_2^1, \ldots, a_n^1)$). Moreover, property (Q2) guarantees that the determinant of $Df_{\vec{\sigma}}$ remains unaltered, and property (Q1) guarantees that $\vec{\sigma}$ remains a Nash equilibrium throughout the whole process. This would prove Theorem~\ref{sec:thm-bot} for $n \ge 3$.

Thus, it remains to show that an array $\vec{x}^j$ always exists. We will show this for $j = n+1$, which has the following expression:

$$
\begin{array}{lll}\
f_{\vec{\sigma}}^{n+1}(\vec{\sigma}) & = & (u_1(a_1^1, a_2^1, \ldots, a_n^1) - u_1(a_1^2, a_2^1, \ldots, a_n^1))\cdot p_2^1p_3^1\ldots p_n^1 \\
& + & (u_1(a_1^1, a_2^1, \ldots, a_n^2) - u_1(a_1^2, a_2^1, \ldots, a_n^2))\cdot p_2^1p_3^1\ldots p_n^2 \\
& + & (u_1(a_1^1, a_2^{N_2}, \ldots, a_n^{N_n}) - u_1(a_1^2, a_2^{N_2}, \ldots, a_n^{N_n}))\cdot p_2^{N_2}p_3^{N_3}\ldots p_n^{N_n}
\end{array}
$$

The proof for the other components is analogous. Our aim, thus, is to find an array $$\vec{x} = (x_{(1,1,\ldots, 1)}, x_{(1,1,\ldots, 2)}, \ldots, x_{(N_2, N_3, \ldots, N_n)})$$ such that $R(n+1, \vec{x})$ satisfies (Q1), (Q2) and (Q3). This is done in the following proposition.

\begin{proposition}\label{prop:magic-array}
Let $\vec{x}$ be an array defined as follows.

$$x_{(i_2,i_3, \ldots, i_n)} = 
\left\{
\begin{array}{ll}
0 & \mbox{if } i_j > 2 \mbox{ for some } j \in \{2,3,\ldots, n\}\\
(-1)^{(i_2 + \ldots + i_n + n + 1)} \cdot \sigma_2^{(3 - i_2)}\sigma_3^{(3 - i_3)} \ldots \sigma_n^{(3 - i_n)} & \mbox{otherwise.}
\end{array}
\right.$$

Then, $R(n+1, \vec{x})$ satisfies (Q1), (Q2) and (Q3). 
\end{proposition}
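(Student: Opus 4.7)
My plan is to exhibit a hidden factorisation of the polynomial added to $f^{n+1}_{\vec{\sigma}}$ by the commitment. Writing
\[
\Delta f(\vec{p}) \;:=\; \bigl(f^{n+1}_{\vec{\sigma}} \oplus R(n+1, \vec{x})\bigr)(\vec{p}) - f^{n+1}_{\vec{\sigma}}(\vec{p}) \;=\; \sum_{(i_2, \ldots, i_n)} x_{(i_2, \ldots, i_n)} \, p_2^{i_2} \cdots p_n^{i_n},
\]
only the $2^{n-1}$ indices with $i_j \in \{1,2\}$ contribute, since $x$ vanishes outside this sub-cube. Pulling out the global sign $(-1)^{n+1}$ and using $(-1)^{i_2 + \cdots + i_n} = \prod_{j=2}^n (-1)^{i_j}$, the sum splits coordinate-wise into the product
\[
\Delta f(\vec{p}) \;=\; (-1)^{n+1}\prod_{j=2}^n \bigl(\sigma_j^1 \, p_j^2 - \sigma_j^2 \, p_j^1\bigr).
\]
Establishing this identity is the substantive step; once it is in hand, everything else follows almost mechanically.

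With the factorisation available, the three properties reduce to short observations. For (Q1), setting $\vec{p} = \vec{\sigma}$ makes every factor $\sigma_j^1 \sigma_j^2 - \sigma_j^2 \sigma_j^1 = 0$, so $\Delta f(\vec{\sigma}) = 0$. For (Q2), note that $\Delta f$ depends only on the variables $p_k^\ell$ with $k \ge 2$ and $\ell \in \{1,2\}$, so partial derivatives with respect to $p_1^{\cdot}$ or to $p_k^\ell$ with $\ell \ge 3$ vanish trivially. For the remaining variables, differentiation strikes a single factor of the product and leaves the other $n-2$ factors intact; since we assume $n \ge 3$, at least one untouched factor survives, and it still evaluates to $0$ at $\vec{\sigma}$, so $\nabla \Delta f(\vec{\sigma}) = \vec{0}$. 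For (Q3), the lexicographically first index is $(1,1,\ldots,1)$, and the formula gives $x_{(1,\ldots,1)} = (-1)^{(n-1)+n+1}\prod_{j=2}^n \sigma_j^2 = \prod_{j=2}^n \sigma_j^2$, which is strictly positive because $\vec{\sigma}$ has full support.

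The only real obstacle is discovering the factorisation itself. One is led to it by the preceding three-player example, whose array $\lambda\cdot(p_2^1 p_3^1,\, -p_2^1 p_3^0,\, -p_2^0 p_3^1,\, p_2^0 p_3^0)$ is exactly the $n=3$ specialisation of the product above. As a last sanity check, one should verify that $R(n+1, \vec{x})$ is actually executable: each non-zero entry of $\vec{x}$ modifies the utility of an outcome of the form $(a_1^1, a_2^{i_2}, \ldots, a_n^{i_n})$ or $(a_1^2, a_2^{i_2}, \ldots, a_n^{i_n})$ with $i_j \in \{1,2\}$, and such utility adjustments are precisely what superpositions of the elementary $P$ and $M$ commitments of player~$1$ can realise.
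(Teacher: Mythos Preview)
Your proof is correct. The factorisation
\[
\Delta f(\vec{p}) \;=\; (-1)^{n+1}\prod_{j=2}^{n}\bigl(\sigma_j^{1}\,p_j^{2}-\sigma_j^{2}\,p_j^{1}\bigr)
\]
is easily checked by expanding the product, and once it is in place (Q1)--(Q3) drop out exactly as you describe.

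Your route differs from the paper's in a useful way. The paper does not factor $\Delta f$ as a polynomial in $\vec{p}$; instead it substitutes $\vec{p}=\vec{\sigma}$ first, observes that every monomial then contains the common factor $\prod_j \sigma_j^{1}\sigma_j^{2}$, and is left with the alternating sum $\sum_{(i_2,\ldots,i_n)\in\{1,2\}^{n-1}} (-1)^{i_2+\cdots+i_n}$, which it notes is zero. For (Q2) it repeats this computation for a representative partial derivative, obtaining the same alternating sum over $\{1,2\}^{n-2}$. Your approach is more conceptual: by exhibiting $\Delta f$ as a product of $n-1$ linear forms that all vanish at $\vec{\sigma}$, you get (Q1) immediately and (Q2) from the product rule (differentiation kills one factor, $n-2\geq 1$ vanishing factors survive), without having to redo any sums. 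The paper's computation is essentially the expansion of your product evaluated at $\vec{\sigma}$, so the two arguments are close cousins, but yours packages the mechanism more transparently and handles (Q1) and (Q2) in one stroke.
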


\begin{proof}
    \textbf{(Q1)} We have that $$
    (f_{\vec{\sigma}}^{n+1} \oplus R({n+1}, \vec{x}))(\vec{\sigma}) - f_{\vec{\sigma}}^{n+1}(\vec{\sigma}) = \sum_{(i_2, i_3, \ldots, i_n) \in \{1,2\}^{n-1}} (-1)^{(i_2 + \ldots + i_n + n + 1)} \cdot \sigma_2^{(3 - i_2)}\sigma_3^{(3 - i_3)} \ldots \sigma_n^{(3 - i_n)}  \sigma_2^{i_2}\sigma_3^{i_3} \ldots \sigma_n^{i_n},$$ which can be rearranged into $$(-1)^{n+1}\left(\prod_{j = 2}^n \sigma_j^1\sigma_j^2\right) \left(\sum_{(i_2, i_3, \ldots, i_n) \in \{1,2\}^{n-1}} (-1)^{(i_2 + \ldots + i_n)}\right).$$
    It is easy to check that the rightmost term is $0$.

    \textbf{(Q2)} This reduces to check that $$\frac{\partial\left((f_{\vec{\sigma}}^{n+1} \oplus R({n+1}, \vec{x})) - f_{\vec{\sigma}}^{n+1}\right)}{\partial p_i^j} (\vec{\sigma}) = 0$$ for all $i > 1$ and $j \in \{1,2\}$. We show this for $i = 2$ and $j = 1$. Since the function is symmetric on both $i$ and $j$, the general result will follow.

    We have that $$\frac{\partial\left((f_{\vec{\sigma}}^{n+1} \oplus R({n+1}, \vec{x})) - f_{\vec{\sigma}}^{n+1}\right)}{\partial p_2^1} (\vec{\sigma}) = \sum_{(i_3, \ldots, i_n) \in \{1,2\}^{n-2}} (-1)^{(i_3 + \ldots + i_n + n + 2)} \sigma_2^2\sigma_3^{(3 - i_3)}\ldots \sigma_n^{(3 - i_n)} \sigma_3^{i_3} \sigma_4^{i_4} \ldots \sigma_n^{i_n},$$
    which can be rearranged into $$(-1)^n \sigma_2^2 \left(\prod_{j = 3}^n \sigma_j^1\sigma_j^2\right) \left(\sum_{(i_3, \ldots, i_n) \in \{1,2\}^{n-2}} (-1)^{(i_3 + \ldots + i_n)}\right),$$
    which is also equal to $0$.

    \textbf{(Q3)} By construction, the sign of $x_{(1,1,\ldots,1)}$ is given by $(-1)^{2n}$, which is positive.
\end{proof}

This completes the proof of Theorem~\ref{thm:bot} in the case of three or more players.

\section{Proof of Theorem~\ref{thm:bot} for Two-Player Games with Binary Actions}\label{sec:2pl-2act}

The last remaining case is the one with two players and binary actions.
For simplicity, assume that $A_1 = A_2 = \{0,1\}$.

We begin the analysis with a characterization of all $2 
\times 2$ games in which there exists a Nash equilibrium with full support.

\begin{proposition}\label{prop:char-2by2}
Let $\Gamma = (P,A,U)$ be a normal-form game for two players in which $A_1 = A_2 = \{0,1\}$. Then, there exists a Nash equilibrium with full support if and only if, for each player $i$, one of the following conditions holds.

\begin{itemize}
    \item[(i)] Player $i$ is indifferent between $0$ and $1$ (i.e., $u_i(0_i, 0_{-i}) = u_i(1_i, 0_{-i})$ and $u_i(0_i, 1_{-i}) = u_i(1_i, 1_{-i})$).
    \item[(ii)] Player $i$ strictly prefers $0$ when $-i$ (i.e., the other player) prefers $0$ and $i$ strictly prefers $1$ when $-i$ prefers $1$.
    \item[(iii)] Player $i$ strictly prefers $0$ when $-i$ prefers $1$ and $i$ strictly prefers $1$ when $-i$ prefers $0$.
\end{itemize}
\end{proposition}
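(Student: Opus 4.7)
The plan is to reduce the existence of a full-support Nash equilibrium to a pair of one-variable indifference conditions, one for each player. Since $A_1 = A_2 = \{0,1\}$, a full-support strategy profile is determined by two probabilities $p_1, p_2 \in (0,1)$, where $p_i$ denotes the probability that player $i$ plays action $0$. Define $d_i(a) := u_i(0_i, a_{-i}) - u_i(1_i, a_{-i})$ for $a \in \{0,1\}$, and note that player $i$ is indifferent between her two pure actions against the opponent's mixed strategy iff
\[
p_{-i} \cdot d_i(0) + (1 - p_{-i}) \cdot d_i(1) \;=\; 0.
\]
Thus a full-support profile is a Nash equilibrium iff this linear equation in $p_{-i} \in (0,1)$ holds for both $i \in \{1,2\}$, and the proposition reduces to analyzing when such an equation admits a solution in the open unit interval.

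For the forward direction, I would fix a player $i$ and assume the above equation holds for some $p_{-i} \in (0,1)$. If $d_i(0) = 0$, then $(1-p_{-i})\,d_i(1) = 0$ and $p_{-i} \neq 1$ force $d_i(1) = 0$, placing $i$ in case (i); the case $d_i(1) = 0$ is symmetric. Otherwise both $d_i(0)$ and $d_i(1)$ are nonzero, and since $0$ is written as a strictly convex combination of them, they must carry opposite strict signs. The sign pattern $d_i(0) > 0 > d_i(1)$ corresponds exactly to case (ii) (player $i$ strictly prefers $0$ when $-i$ plays $0$ and strictly prefers $1$ when $-i$ plays $1$), while $d_i(1) > 0 > d_i(0)$ corresponds to case (iii). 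Applying this dichotomy to both players yields the stated trichotomy.

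For the converse, in each case I would exhibit an explicit $p_{-i} \in (0,1)$ satisfying the indifference condition. If $i$ is in case (i), any $p_{-i} \in (0,1)$ works; if $i$ is in case (ii) or (iii), $d_i(0)$ and $d_i(1)$ have opposite strict signs, so the unique solution $p_{-i} = d_i(1)/(d_i(1) - d_i(0))$ lies in $(0,1)$. Since the indifference condition for player $i$ depends only on $p_{-i}$, the choices for $p_1$ and $p_2$ decouple, and picking one valid value for each player yields a full-support profile at which both players are indifferent between their pure actions, i.e., a full-support Nash equilibrium.

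The argument is essentially a sign case analysis on $d_i(0)$ and $d_i(1)$; no step is conceptually hard. The only subtlety worth flagging is the boundary situation in which exactly one of $d_i(0), d_i(1)$ vanishes: such a profile would force $p_{-i} \in \{0,1\}$, which is incompatible with the opponent's full-support strategy, and it is precisely this observation that collapses any ``hybrid'' scenario into case (i).
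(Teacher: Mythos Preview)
Your proposal is correct and follows essentially the same approach as the paper: both reduce the full-support equilibrium condition to the indifference equations $p_{-i}\,d_i(0) + (1-p_{-i})\,d_i(1) = 0$, observe that a solution $p_{-i}\in(0,1)$ forces $d_i(0)$ and $d_i(1)$ to be either both zero or both nonzero with opposite signs, and then exhibit the explicit solution $p_{-i} = d_i(1)/(d_i(1)-d_i(0))$ in the latter case. Your treatment of the boundary situation (exactly one $d_i(a)$ vanishing) is in fact slightly more explicit than the paper's, which simply asserts the dichotomy without spelling out why that case is excluded.
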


\begin{proof}
    As argued in Section~\ref{sec:degenerate}, $\Gamma$ has a Nash equilibrium with full support if and only if there exists a solution in $(0,1)^4$ of the following system of equations
    $$\begin{pmatrix}
        1 & 1 & 0 & 0\\
        d_1(0) & d_1(1)& 0 & 0 \\ 
        0 & 0 & 1 & 1 \\
        0 & 0 & d_2(0) & d_2(1)\\
    \end{pmatrix}
    \begin{pmatrix}
        p_2^0 \\ p_2^1 \\ p_1^0 \\ p_1^1
    \end{pmatrix}
    =
    \begin{pmatrix}
        1 \\ 0 \\ 1 \\ 0
    \end{pmatrix},
    $$
    where $d_i(a)$ denotes $u_i(0_i, a_{-i}) - u_i(1_i, a_{-i})$. This is equivalent to the following systems having solutions in $(0,1)^2$.
    $$\begin{pmatrix}
        1 & 1 \\
        d_1(0) & d_1(1)\\
    \end{pmatrix}
    \begin{pmatrix}
        p_2^0 \\ p_2^1
    \end{pmatrix}
    =
    \begin{pmatrix}
        1 \\ 0
    \end{pmatrix},
    $$

    $$\begin{pmatrix}
        1 & 1 \\
        d_2(0) & d_2(1)\\
    \end{pmatrix}
    \begin{pmatrix}
        p_1^0 \\ p_1^1
    \end{pmatrix}
    =
    \begin{pmatrix}
        1 \\ 0
    \end{pmatrix}.
    $$

    We will restrict our analysis to the first two coordinates of the solution, since the reasoning for the latter two is analogous. Since the probabilities are strictly positive, in order to have a solution in $(0,1)^2$, $d_1(0)$ and $d_1(1)$ must be either both zero or both non-zero and of different signs. In the first case, any solution in $(0,1)^2$ is valid. In the latter one we have that $p_2^0$ and $p_2^1$ are given by 

    $$\begin{array}{lll}
    p_2^0 & = & \frac{d_1(1)}{d_1(1) - d_1(0)}\\
    p_2^1 & = & \frac{-d_1(0)}{d_1(1) - d_1(0)}.\\
    \end{array}
    $$

    It is easy to check that these values are indeed in $(0,1)$ if $d_1(0)$ and $d_1(1)$ have different signs.
\end{proof}

This characterization shows that, if there exists a Nash equilibrium in $\Gamma$ with full support, then players must be of type (i) or (iii) (if any of them was of type (ii), $\vec{\sigma}$, would be degenerate). We next provide a commitment protocol $\pi^\delta_{2 \times 2}$ that allows players to converge on $\vec{a}^{PI}$. For simplicity, assume that $\vec{a}^{PI} = (0,0)$ and that $\delta < \min\{|u_i(1_i, 0_{-i}) - u_i(0_i, 0_{-i})|, |u_i(0_i,1_{-i}) - u_i(1_i, 1_{-i})|\}$ for all players $i$ of type (i) or (iii).  For completion, we will also say that a player is of type (iv) if it does not fall into any of the other three categories.

\textbf{Protocol $\pi^\delta_{2 \times 2}$:}

\begin{itemize}
    \item [\textbf{Step 1:}] Each player $i$ commits to pay $\delta$ in $(0_i, 1_{-i})$ and $(1_i, 1_{-i})$ in successive commitment steps, until their utility function satisfies $u_i(0_i,0_{-i}) > 
    max
    \{u_i(0_i, 1_{-i}), u_i(1_i, 1_{-i})\}$.
    \item [\textbf{Step 2:}] If player $i$ is of type (iii), in each commitment step, it commits to pay the following:

    $$
    \begin{array} {ll}
    \min\{\delta, u_i(1_i, 0_{-i}) - u_i(0_i, 0_{-i}) - \delta \} & \mbox{in } (1,0)\\
    \min\{\delta, u_i(0_i,1_{-i}) - u_i(1_i, 1_{-i}) - \delta \} & \mbox{in } (0,1).
        
    \end{array}
    $$

    Each player of type (iii) repeats this process until all players of this type satisfy that 
    $$u_i(1_i, 0_{-i}) - u_i(0_i, 0_{-i}) = u_i(0_i,1_{-i}) - u_i(1_i, 1_{-i}) = \delta.$$

    \item[\textbf{Step 3:}] Each player $i$ of type (iii) commits to pay $\delta$ on $(1,0)$ and $(0,1)$. Then, they vote to terminate the commitment phase.
\end{itemize}

Since players can only be of type (i) or (ii) on termination, it is easy to check that $(0,0)$ is a Nash equilibrium by the end of the protocol. It only remains to define what the players should do in case someone defects. The next proposition shows that, immediately after one of the players defects, they can always find a Nash equilibrium in which both of them are worse than in $(0,0)$. This shows that $\pi_{2 \times 2}$ can be extended to a subgame perfect equilibrium in $\Gamma^{\delta}$ in which players end up playing $(0,0)$.

\begin{proposition}\label{prop:2times2}
Let $\Gamma = (P,A,U)$ be a normal-form for two players with $A_1 = A_2 = \{0,1\}$ in which there exists a Nash equilibrium $\vec{\sigma}$ with full support and in which $(0,0)$ Pareto improves $\vec{\sigma}$. Suppose that players run protocol $\pi^\delta_{2 \times 2}$ and a player defects from the proposed strategy at some commitment step. Then, there exists a Nash equilibrium $\vec{\sigma}'$ in the resulting game $\Gamma' = (P,A,U')$ such that $u'_i(\vec{\sigma}') \le u_i(0,0)$ for all players $i$.
\end{proposition}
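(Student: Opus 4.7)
My plan is to analyze the game $\Gamma'$ resulting from any defection at step $k$ and exhibit the required Nash equilibrium. First, I would record three invariants that the prescribed protocol maintains in every intermediate game $\Gamma^{k-1}$ reached before the defection: (i) no prescribed commitment ever touches the outcome $(0,0)$, so $u_i^{k-1}(0_i,0_{-i}) = u_i(0_i,0_{-i})$; (ii) every prescribed commitment in Steps~1 and 2 preserves each player's type---either (i) or (iii) in the classification of Proposition~\ref{prop:char-2by2}---because the two payments placed within a single round lie in the same $a_{-i}$-column and symmetrically shift both entries, leaving the relevant type-defining differences unchanged; and (iii) at Step~3 the type-(iii) players collapse to type (i), after which every strategy profile is a Nash equilibrium. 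In particular, each $\Gamma^{k-1}$ admits a fully-supported Nash equilibrium $\vec{\sigma}^{k-1}$ by Proposition~\ref{prop:char-2by2}.

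Second, I would extract a strict utility bound on $\vec{\sigma}^{k-1}$. Because $\vec{\sigma}$ has full support in $\Gamma$ and $(0,0)$ strictly Pareto-improves it, player $i$'s indifference in $\vec{\sigma}$ combined with $u_i(\vec{\sigma}) < u_i(0,0)$ and $\sigma_{-i}^1 > 0$ yields $u_i(0_i,1_{-i}) < u_i(0_i,0_{-i})$ for every $i$. This strict inequality persists in every $\Gamma^{k-1}$ since the protocol only decreases $u_i(0_i,1_{-i})$ and preserves $u_i(0_i,0_{-i})$. The indifference of $\vec{\sigma}^{k-1}$ then expresses $u_i^{k-1}(\vec{\sigma}^{k-1})$ as a convex combination of $u_i(0_i,0_{-i})$ and $u_i^{k-1}(0_i,1_{-i})$, so it lies strictly below $u_i(0,0)$ with a gap that is bounded below uniformly over the finitely many protocol steps.

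Third, I would construct $\vec{\sigma}'$ in $\Gamma'$. Each single-round commitment changes any utility entry by at most $\delta$, so $d(\Gamma',\Gamma^{k-1}) \le \delta$. If $(0,0)$ is a Nash equilibrium of $\Gamma'$ and the defector placed no payment on $(0,0)$, I pick $\vec{\sigma}' = (0,0)$, giving $u'_i(0,0) \le u_i(0,0)$ for both players directly. Otherwise, the non-degeneracy of $\vec{\sigma}^{k-1}$ together with Theorem~\ref{thm:strong-robust} produces a fully-supported Nash equilibrium $\vec{\sigma}'$ of $\Gamma'$ sharing the support of $\vec{\sigma}^{k-1}$, whose per-player utility differs from $u_i^{k-1}(\vec{\sigma}^{k-1})$ by less than the uniform gap obtained in the previous paragraph; taking the minimum of the requisite $\delta$'s over the finitely many protocol steps gives a single $\delta > 0$ for which this works simultaneously at every step.

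The main obstacle is handling defections that shift a player's type in $\Gamma'$ beyond what strong punishability alone can absorb---for instance, the defector overpaying in $(1_i,0_{-i})$ to become type (ii), or a defection around Step~3 where $\vec{\sigma}^{k-1}$ is already close to degenerate. For these I would argue by direct case analysis: re-applying Proposition~\ref{prop:char-2by2} to $\Gamma'$ yields either a pure equilibrium (when some player becomes type (ii), in which case $u'_i$ in that outcome is bounded by the original $u_i$ in that outcome, which in turn is below $u_i(0,0)$ in the entries we need) or a fully-mixed one (otherwise), and the bounds above, combined with the fact that the non-defector's utility in $(0,0)$ remains exactly $u_i(0,0)$, let me verify $u'_i(\vec{\sigma}') \le u_i(0,0)$ in each remaining subcase.
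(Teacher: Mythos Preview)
Your approach routes through Theorem~\ref{thm:strong-robust} and a uniform punishability radius, but this creates a circularity you do not resolve: the proposition is stated for a \emph{given} $\delta$ (subject only to the bound preceding the protocol), yet your Step~3 asks to ``take the minimum of the requisite $\delta$'s over the finitely many protocol steps.'' Those requisite radii depend on how close $\Gamma^{k-1}$ is to degeneracy, and by the end of Step~2 the type-(iii) differences are exactly $\pm\delta$, so the strong-punishability radius there may well be strictly smaller than $\delta$. Moreover, the number of protocol steps itself scales with $\delta^{-1}$, so shrinking $\delta$ to accommodate the minimum changes the very set you are minimizing over. A secondary issue: your invariant~(ii) misdescribes Step~2 --- the two payments there land in \emph{different} $a_{-i}$-columns $(1_i,0_{-i})$ and $(0_i,1_{-i})$, so the differences $d_i(0),d_i(1)$ \emph{do} change; type~(iii) is preserved only because the payment caps keep the signs fixed.

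The paper avoids all of this by a much more elementary argument. The key observation (which you essentially have in your second paragraph but then abandon) is that \emph{any} full-support equilibrium $\vec{\sigma}'$ of $\Gamma'$ automatically satisfies $u'_i(\vec{\sigma}') \le u_i(0,0)$: since utilities only decrease under burning, $u'_i(0_i,0_{-i})\le u_i(0_i,0_{-i})$ and $u'_i(0_i,1_{-i})\le u_i(0_i,1_{-i})<u_i(0_i,0_{-i})$, and indifference gives $u'_i(\vec{\sigma}')$ as a convex combination of these two. So no closeness to $\vec{\sigma}^{k-1}$ is needed --- only \emph{existence} of a full-support equilibrium in $\Gamma'$. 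For that, the $\delta$-bound on commitments together with the fact that type-(iii) differences stay at least $\delta$ in absolute value through Steps~1--2 means a single defection cannot flip their signs, so Proposition~\ref{prop:char-2by2} applies directly. The only residual case is a defection at Step~3 that lands the defector in type~(iv), which the paper handles by exhibiting a pure equilibrium (either $(0,0)$ or $(1,1)$) using the inequality enforced in Step~1. You should replace the strong-punishability machinery with this direct type-preservation argument.
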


\begin{proof}
Note that the utility function of both players must satisfy that $u_i(0_i, 1_{-i}) \le u_i(0_i, 0_{-i})$, since otherwise it would contradict the fact that $(0,0)$ Pareto improves $\vec{\sigma}$. This means that, if after some commitment step, there exists a Nash equilibrium with full support, this Nash equilibrium is going to give less utility to the players than playing $(0,0)$ in the original game.

By construction, the type of the players is constant throughout the whole protocol with the exception of players of type (iii), who change to type (ii) at Step 3. Moreover, since $\delta < \min\{|u_i(1_i, 0_{-i}) - u_i(0_i, 0_{-i})|, |u_i(0_i,1_{-i}) - u_i(1_i, 1_{-i})|\}$ for all players $i$ of type (i) or (iii),  even if a player defects, she cannot change her type by doing so. Again, the only exception is at Step 3, where a player of type (iii) can choose to remain of type (iii) or convert to type (iv) instead of converting to type (ii). 
By Proposition~\ref{prop:char-2by2}, this means that, if a player defects before Step 3 or if a player defects in Step 3 without converting to type (iv), there always exists a Nash equilibrium $\vec{\sigma}'$ with full support in the resulting game, which by our previous argument satisfies $u'_i(\vec{\sigma}') \le u_i(0,0)$. The only remaining case is when a player $i$ defects in Step 3 and converts to Step (iv). In this case, because of Step 1 of $\pi^\delta_{2 \times 2}$, one of the following must hold:

\begin{itemize}
    \item $(0,0)$ is the outcome that gives the most utility to $i$.
    \item $u_i(0_i, 0_{-i}) < u_i(1_i, 0_{-i})$ and $u_i(0_i, 1_{-i}) = u_i(1_i, 1_{-i})$.
\end{itemize}

In the first case, any Nash equilibrium of the resulting game will suffice. In the latter, it can be easily checked that $(1,1)$ is a Nash equilibrium of the resulting game that satisfies $u'_i(1,1) < u_i(0,0)$. 

\end{proof}

\section{Proof of Theorem~\ref{thm:main}}\label{sec:thm-main}

The proof of Theorem~\ref{thm:main} is based on the one of Theorem~\ref{thm:bot}. In fact, the commitment protocol that we use for the proof consists of two stages. First, players select a maximizing welfare action $\vec{a}^{SW}$ and commit to transfer utilities in such a way that, by the end of this stage, each player $i$ gets exactly $x_i$ utility in $\vec{a}^{SW}$ and $\vec{\sigma}$ remains a Nash equilibrium of the resulting game that is Pareto improved by $\vec{a}^{SW}$. Second, players use the techniques of Section~\ref{sec:thm-bot} to make $\vec{a}^{SW}$ a Nash equilibrium of the resulting game. 

If we follow this approach, the proof of Theorem~\ref{thm:main} reduces to construct a commitment protocol that implements the first stage. Moreover, analogously to Proposition~\ref{prop:curve}, constructing such a commitment protocol further reduces to find a path as follows.

\begin{proposition}\label{prop:main-curve}
Suppose that there exists a path $\gamma$ from $\Gamma$ to another game $\Gamma' = (P,A,U')$ that satisfies the following conditions.

\begin{itemize}
    \item[(Q1)] $\gamma$ is continuous and finite.
    \item [(Q2)] For all action profiles $\vec{a}$, $w(\vec{a})$ (i.e., the social welfare on $\vec{a}$) weakly decreases through $\gamma$.
    \item [(Q3)] There exists an action profile  $\vec{a}^{SW}$ such that $u'_i(\vec{a}^{SW}) = x_i$ for all $i \in P$.
    \item [(Q4)] For all $\Gamma'' = (P,A,U'') \in \gamma$,
    $\vec{\sigma}$ is an equilibrium of $\Gamma''$.
    \item [(Q5)] For each player $i \in P$, the expected utility of playing $\vec{\sigma}$ weakly decreases through $\gamma$.
\end{itemize}

Then, there exists $D' > 0$ such that, for all $\delta \in (0, D')$, there exists a commitment protocol that goes from $\Gamma$ to $\Gamma'$ that satisfies the following conditions.

\begin{itemize}
    \item [(i)] At the beginning and after each commitment step, $\vec{\sigma}$ is a $(L_{\vec{\sigma}}^{\vec{x}}, \delta)$-strongly punishable Nash equilibrium of the resulting game, where $L_{\vec{\sigma}}^{\vec{x}} := \max_{i \in P}\{x_i - u_i(\vec{\sigma})\}$.
    \item [(ii)] Each player pays at most $\delta$ on each outcome in each commitment step.
\end{itemize}
\end{proposition}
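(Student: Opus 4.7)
The plan is to mirror Proposition~\ref{prop:curve}: use compactness of $\gamma$ together with Lemma~\ref{lemma:compact} to pin down a uniform $D'$, then discretize the path into finitely many steps, each implemented as a single commitment round of $\Gamma^\delta$. The new difficulty, compared with Proposition~\ref{prop:curve}, is that we now allow genuine inter-player transfers (not just money burning), so every infinitesimal move along $\gamma$ must be realized by a valid commitment of \emph{each} player that respects the per-outcome cap of $\delta$. Hypothesis (Q2) is the crucial ingredient: because social welfare on every outcome weakly decreases along $\gamma$, the utility to be removed from an outcome is always at least as large as the utility to be added, so the change can be routed to the gaining players, with any slack absorbed by $\bot$.

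First I would apply Lemma~\ref{lemma:compact} to the image of $\gamma$, which is compact by (Q1) and Heine-Borel (it is a continuous image of $[0,1]$). Interpreting (Q4) as the statement that $\vec{\sigma}$ is a \emph{non-degenerate} Nash equilibrium of every $\Gamma''\in\gamma$ (as is required by the strong-punishability clause in (i)), Theorem~\ref{thm:strong-robust} together with Lemma~\ref{lemma:compact} produces $D'>0$ such that $\vec{\sigma}$ is $(L_{\vec{\sigma}}^{\vec{x}},D')$-strongly punishable on every $\Gamma''\in\gamma$. Fix any $\delta\in(0,D')$. Using uniform continuity of the parametrization $t\mapsto\Gamma_t$ on $[0,1]$, choose a finite partition $0=t_0<t_1<\cdots<t_K=1$ with $d(\Gamma_{t_k},\Gamma_{t_{k+1}})<\delta$ for every $k$.

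For the $(k{+}1)$-st commitment round, set $\Delta_i(\vec{a}):=u^{t_k}_i(\vec{a})-u^{t_{k+1}}_i(\vec{a})$. By construction $|\Delta_i(\vec{a})|<\delta$, and by (Q2) we have $\sum_{i\in P}\Delta_i(\vec{a})\geq 0$ for every $\vec{a}$. I would then decompose this vector into valid commitments as follows: on outcome $\vec{a}$, each player $i$ with $\Delta_i(\vec{a})>0$ pledges to pay $\Delta_i(\vec{a})$ in total; these pledges are routed to each $j$ with $\Delta_j(\vec{a})<0$ in amount $|\Delta_j(\vec{a})|$ (split among the paying players in any fixed way), with any remaining non-negative surplus sent to $\bot$. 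Feasibility of the routing is exactly the inequality $\sum_{i:\Delta_i>0}\Delta_i(\vec{a})\geq\sum_{j:\Delta_j<0}|\Delta_j(\vec{a})|$, which is (Q2) rewritten. Each player's total outlay on $\vec{a}$ is at most $|\Delta_i(\vec{a})|<\delta$, establishing (ii); and the game resulting from round $k{+}1$ is exactly $\Gamma_{t_{k+1}}\in\gamma$, so the uniform strong punishability obtained above gives (i).

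The main obstacle is the decomposition itself: the per-player, per-outcome cap $\delta$ must be respected \emph{simultaneously} with the requirement that the joint commitment reproduces the prescribed change in the utility profile along $\gamma$. Hypothesis (Q2) is exactly the aggregate feasibility condition needed, and $\bot$ absorbs any slack, so the construction goes through. Hypothesis (Q5) plays a supporting role: it ensures that the baseline $u^{t}_i(\vec{\sigma})$ does not drift upward along $\gamma$, so the punishment equilibrium furnished by strong punishability remains credible as a threat relative to the final target $x_i$ and not just to the current state of the game.
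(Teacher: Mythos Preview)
Your proposal is correct and follows the same approach as the paper, which simply states that the proof of Proposition~\ref{prop:main-curve} is ``pretty much identical to that of Proposition~\ref{prop:curve}.'' You have faithfully reproduced the compactness-plus-discretization argument of Proposition~\ref{prop:curve} and, in addition, supplied the one genuinely new detail the paper leaves implicit: because Proposition~\ref{prop:main-curve} lives in $\Gamma^\delta$ rather than $\Gamma^\delta_\bot$, some players' utilities on a given outcome may \emph{increase} along $\gamma$, and you correctly use (Q2) to show that the required transfer pattern (losers route to gainers, surplus to $\bot$) is always feasible within the per-player per-outcome cap. Your reading of (Q4) as asserting non-degeneracy is also the right one, matching (P4$'$) in Proposition~\ref{prop:curve} and consistent with the path actually constructed in Proposition~\ref{prop:constructing-path-main}.
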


The proof of Proposition~\ref{prop:main-curve} is pretty much identical to that of Proposition~\ref{prop:curve}.
We next that Theorem~\ref{thm:main} reduces to finding such a path $\gamma$. Suppose that there exists a path from $\Gamma$ to some other game $\Gamma'$ that satisfies (Q1), (Q2), (Q3) and (Q4) and (Q5).
Since $\vec{a}^{SW}$ strictly Pareto-improves $\vec{\sigma}$ in $\Gamma'$, we can apply Theorem~\ref{thm:bot} to deduce that there exists $D > 0$ such that, for all $\delta \in (0,D)$, there exists a subgame perfect Nash equilibrium in $(\Gamma')^\delta_\bot$ which players start on $\Gamma'$ and each player $i$ ends up receiving exactly $x_i$ utility. It is easy to check that this is also a subgame perfect Nash equilibrium in $(\Gamma')^\delta)$. Similarly, by Proposition~\ref{prop:main-curve}, there exists $D' > 0$ such that, for all $\delta \in (0,D)$, there is a commitment protocol that satisfies (i) and (ii). This means that, for all $\delta \in (0, \min\{D, D'\})$, players can run the protocol obtained from Proposition~\ref{prop:main-curve} to go from $\Gamma$ to $\Gamma'$, and then the protocols used in the proof of Theorem~\ref{thm:bot} to converge on a payoff profile that gives $x_i$ utility to each player $i$. It only remains to show that there always exists a punishment equilibrium, even if a player defects from the main protocol. Property (ii) of Proposition~\ref{prop:main-curve} guarantees its existence between $\Gamma$ and $\Gamma'$. After $\Gamma'$, the punishment equilibrium is provided by the protocols used in Section~\ref{sec:thm-bot}.

The argument above shows that Theorem~\ref{thm:main} reduces to find a path $\gamma$ that satisfies the properties of Proposition~\ref{prop:main-curve}. The following proposition shows how to construct such a path.

\begin{proposition}\label{prop:constructing-path-main}
Suppose without loss of generality that $\vec{a}^{SW} := (a_1^1, a_2^2, \ldots, a_n^2)$ is a welfare maximizing outcome of $\Gamma$. Consider the path $\gamma(\lambda)$ where the utilities of the players are defined by

$$(U(\lambda))_i(\vec{a}) := \left\{
\begin{array}{ll}
u_i(\vec{a}) + \lambda \cdot (x_i - u_i(\vec{a}^{SW})) & \mbox{if } u_i(\vec{a}^{SW}) > x_i \mbox{ or } \vec{a} \mbox{ is of the form } (a_i^j, \vec{a}^{SW}_{-i})\\
u_i(\vec{a}) - \lambda \frac{\sigma_{i+1}^1}{\sigma_{i+1}^2} (x_i - u_i(\vec{a}^{SW})) &  \mbox {if } u_i(\vec{a}^{SW}) < x_i \mbox { and } \vec{a} \mbox{ is of the form } (a_i^j, a_{i+1}^2, \vec{a}^{SW}_{-\{i, i+1\}})\\
u_i(\vec{a}) & \mbox{Otherwise}
\end{array}
\right.$$
for $\lambda \in [0,1]$, where we identify player $n+1$ with player $1$. This path satisfies (Q1), (Q2), (Q3), (Q4) and (Q5).
\end{proposition}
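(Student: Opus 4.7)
The plan is to verify the five properties (Q1)--(Q5) in turn, exploiting the facts that $\gamma$ is affine in $\lambda$ and that the coefficients appearing in cases~1 and~2 are tuned so that $\vec{\sigma}$ continues to satisfy its indifference conditions and so that the welfare-maximising identity $\sum_{i\in P}(x_i - u_i(\vec{a}^{SW})) = w_{\max}(\Gamma) - w(\vec{a}^{SW}) = 0$ translates into non-positive changes in both $w(\vec{a})$ and $u_i(\vec{\sigma})$.

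Properties (Q1) and (Q3) are easy.  Continuity is immediate because every coordinate $(U(\lambda))_i(\vec{a})$ is affine in $\lambda$, and the path is of finite length because it is parameterised by the compact interval $[0,1]$.  For (Q3) I would evaluate at $\lambda = 1$ and $\vec{a} = \vec{a}^{SW}$: the profile $\vec{a}^{SW}$ trivially has the form $(a_i^j, \vec{a}^{SW}_{-i})$ with $a_i^j := \vec{a}^{SW}_i$, so case~1 applies for every player (via the first disjunct for paying players and the second disjunct for receiving players), yielding $(U(1))_i(\vec{a}^{SW}) = u_i(\vec{a}^{SW}) + (x_i - u_i(\vec{a}^{SW})) = x_i$.

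The next step is (Q4).  The key structural property is a \emph{row-invariance}: for each player $i$ and each fixed tuple $\vec{a}_{-i}$ of the other players' actions, the correction added to $u_i(a_i^j, \vec{a}_{-i})$ by any of the three branches is independent of $a_i^j$.  This is immediate for the uniform shift in case~1 applied to a paying player, and it holds for case~1's second disjunct and for case~2 because both quantify over all $a_i^j$ with $\vec{a}_{-i}$ fixed equal to $\vec{a}^{SW}_{-i}$ or to $(a_{i+1}^2, \vec{a}^{SW}_{-\{i,i+1\}})$ respectively.  Consequently, for any action $a_i^k$, the expected utility $u_i(a_i^k, \vec{\sigma}_{-i})$ shifts by a quantity that does not depend on $k$, so the indifference relations between actions in the support of $\sigma_i$ are preserved and non-support actions remain weakly worse.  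Applying this argument to every player shows that $\vec{\sigma}$ is a Nash equilibrium of $(P, A, U(\lambda))$ for every $\lambda \in [0,1]$.

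Finally, (Q2) and (Q5) follow by aggregating the per-profile shifts.  For a paying player, the uniform shift of case~1 decreases $u_i(\vec{\sigma})$ by $\lambda(u_i(\vec{a}^{SW}) - x_i) > 0$.  For a receiving player, the coefficient $\sigma_{i+1}^1/\sigma_{i+1}^2$ in case~2 is chosen precisely so that the increase in $u_i(\vec{\sigma})$ produced by case~1 on the profiles with $\vec{a}_{-i} = \vec{a}^{SW}_{-i}$ is cancelled by the decrease in case~2 on the ``cyclically shifted'' profiles on which player $i+1$ plays $a_{i+1}^2$; this yields (Q5) with equality on receiving players and with strict inequality on paying players.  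For (Q2), the change in $w$ on $\vec{a}^{SW}$ is $\lambda \sum_i (x_i - u_i(\vec{a}^{SW})) = 0$ by the welfare identity, while on any other profile only paying players (through case~1) and receiving players (through case~2) contribute, and both contributions are non-positive by direct inspection of signs.  The main obstacle will be the careful bookkeeping in (Q5): one has to disentangle whether case~1's second disjunct and case~2's condition refer to the same set of profiles (which happens exactly when $\vec{a}^{SW}_{i+1} = a_{i+1}^2$) or to disjoint sets (when $i + 1 = 1$), and verify that the chained coefficient $\sigma_{i+1}^1/\sigma_{i+1}^2$ really produces the claimed cancellation.  Once this is checked, the remaining verifications are routine computations.
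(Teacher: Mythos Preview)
Your overall approach matches the paper's: (Q1), (Q3) by construction; (Q4) via the observation that the correction to $u_i$ is constant along the $A_i$-fibre (what the paper phrases as ``$f_{\vec\sigma}$ is invariant''); and (Q5) via the exact cancellation between the case~1 and case~2 terms for a receiving player, which is precisely the computation the paper writes out.

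However, your argument for (Q2) has a genuine gap. You assert that on any profile other than $\vec a^{SW}$, the only contributions to the change in $w$ come from paying players via case~1 and from receiving players via case~2, both of which are non-positive. This is false. Take a receiving player $i$ and a profile $\vec a = (a_i^j,\vec a^{SW}_{-i})$ with $a_i^j\neq a_i^{SW}$. The second disjunct of case~1 applies to $i$ on this profile, and the contribution to $w(\vec a)$ is $\lambda\,(x_i-u_i(\vec a^{SW}))>0$. So on these profiles the social welfare \emph{can} go up for some players, and the conclusion does not follow ``by direct inspection of signs''.

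The paper's fix is exactly the step you are missing: restrict attention to the only profiles where a positive term can appear, namely those of the form $(a_i^j,\vec a^{SW}_{-i})$, and bound the net change there by
\[
\sum_{\{k:\,u_k(\vec a^{SW})>x_k\}}(x_k-u_k(\vec a^{SW}))\;+\;\sum_{k\in S(\vec a)}(x_k-u_k(\vec a^{SW}))
\;\le\;\sum_{k\in P}(x_k-u_k(\vec a^{SW}))\;=\;0,
\]
where $S(\vec a)$ collects the receiving players for whom the second disjunct of case~1 fires at $\vec a$. The welfare identity $\sum_k x_k=\sum_k u_k(\vec a^{SW})$ is what closes the argument, not a sign check. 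Once you replace your one-line (Q2) with this inequality, the rest of your proposal goes through and coincides with the paper's proof.
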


\begin{proof}
It is easy to check that properties (Q1) and (Q3) are satisfied by construction. To see that it also satisfies (Q2), note that the only outcomes $\vec{a}$ that may increase their social welfare are those of the form $(a_i^j, a_{-i}^{SW})$ for some $i \in P$ and $j \le N_i$. In order to check that the social welfare on these outcomes do not increase, note that the social welfare on a given outcome is a linear function, which means that we just have to check that the social welfare on these outcomes when $\lambda = 1$ is less or equal than when $\lambda = 0$. If we denote by $S(\vec{a})$ the set of players $k$ such that $u_i(\vec{a}^{SW}) < x_i$ and $\vec{a}$ is of the form $(a_i^j, \vec{a}^{SW}_{-i})$, we have that
$$\begin{array}{lcl}
\omega(1)(\vec{a}) - \omega(0)(\vec{a}) &=& \sum_{\{k \ : \ u_k(a^{SW}) > x_k\}} (x_k - u_k(\vec{a}^{SW})) + \sum_{k \in S(\vec{a})} (x_k - u_k(\vec{a}^{SW}))\\
& \le & \sum_{k \in P} (x_k - u_k(\vec{a}^{SW}))\\
& = & \left(\sum_{k \in P} x_k\right) - \left(\sum_{k \in P}u_k(\vec{a}^{SW})\right),
\end{array}$$
and both sums are equal since $\vec{a}^{SW}$ is a welfare maximizing outcome.

It is straightforward to check that $f_{\vec{\sigma}}$ is invariant, which implies that $\gamma$ satisfies (Q4). To see this, note that, if $u_i(\vec{a}^{SW}) > x_i$, then $i$ loses utility on all outcomes at the same rate. Meanwhile, if $u_i(\vec{a}^{SW}) < x_i$, $i$'s utility is only altered in outcomes of the form $(a_i^j, a_{i+1}^1, \vec{a}^{SW}_{\{i, i+1\}})$ and $(a_i^j, a_{i+1}^2, \vec{a}^{SW}_{\{i, i+1\}})$. It only remains to show that $\gamma$ satisfies (Q5). If $i$ is a player such that $u_i(\vec{a}^{SW}) > 
 x_i$, (Q5) is satisfied automatically since her utility in all outcomes decreases with $\lambda$. If $i$ is a player such that $u_i(\vec{a}^{SW}) < x_i$, we have that $$(U(\lambda)_i)(\vec{\sigma}) = u_i(\vec{\sigma}) + \lambda \cdot (x_i - u_i(\vec{a}^{SW})) \cdot \prod_{j \not = i} \sigma_j^1 - \lambda \frac{\sigma_{i+1}^1}{\sigma_{i+1}^2} (x_i - u_i(\vec{a}^{SW})) \cdot \sigma_{i+1}^2 \cdot \prod_{j \not \in \{i, i+1\}} \sigma_j^1 = u_i(\vec{\sigma}),$$ which is invariant.
\end{proof}

\section{Proof of Theorem~\ref{thm:strong-robust}}\label{sec:thm-strong}

Before proving Theorem~\ref{thm:strong-robust} we need the following notation. Given a game $\Gamma$, we denote by $\vec{x}_\Gamma$ be the ordered vector of all utilities in the game. For instance, if $\Gamma$ is a two-player game in which $A_1 = A_2 = \{0,1\}$, then $$\vec{x}_\Gamma = (u_1(0,0), u_1(0,1), u_1(1,0), u_1(1,1), u_2(0,0), u_2(0,1), u_2(1,0), u_2(1,1)).$$

Moreover, given a point $\vec{x} \in \mathbb{R}^N$, we denote by $\|x\|_2$, $\|x\|_1$ and $\|x\|_\infty$ its Euclidean, Manhattan and infinite norm, respectively. These are defined as follows:

$$\begin{array}{lll}
\|x\|_2 & = & (x_1^2 + x_2^2 + \ldots + x_N^2)^{\frac{1}{2}}\\
\|x\|_1 & = & |x_1| + |x_2| + \ldots + |x_N|\\
\|x\|_\infty & = & \max \{|x_1|, |x_2|, \ldots, |x_N|\}\\
\end{array}
$$

An important property of these norms is that they are \emph{equivalent}. This means that, for each pair of norms $\|\cdot\|_A$ and $\|\cdot\|_B$ from the ones above, there exists a constant $C_{A,B}$ such that $\|\vec{x}\|_A \le C_{A,B} \cdot  \|\vec{x}\|_B$ for all $\vec{x} \in \mathbb{R}^N$.

With this we proceed with the proof. We must also note that, to ease the notation, we will often use $\vec{\sigma}$ to denote the vector of probabilities by which players play their respective actions from their action sets when playing $\vec{\sigma}$ (as in the definition of $f_{\vec{\sigma}}$ and $r_{\vec{\sigma}}$ in Section~\ref{sec:degenerate}).

\begin{proof}
    Consider the function $F_{\vec{\sigma}} (\vec{x}, \vec{p}) : \mathbb{R}^{\left(n  \cdot \prod_{i = 1}^n |A_i| + \sum_{i = 1}^n |\sigma_i(A_i)|\right)} \to \mathbb{R}^{\sum_{i = 1}^n |\sigma_i(A_i)|}$, which is defined identically to $f_{\vec{\sigma}}$ except that we view the utilities of the players as additional variables, as opposed to fixed parameters (note that there are $n  \cdot \prod_{i = 1}^n |A_i|$ of such variables). By construction, we have that $F_{\vec{\sigma}} (\vec{x}_\Gamma, \vec{p}) = f_{\vec{\sigma}}(\vec{p})$, and in particular $F_{\vec{\sigma}}(\vec{x}_\Gamma, \vec{\sigma}) = 0$. Since $|Df_{\vec{\sigma}}| \not = 0$, by the implicit function theorem, there exists a $\mathcal{C}^\infty$ function $g : \mathbb{R}^{\left(n  \cdot \prod_{i = 1}^n |A_i|\right)} \to \mathbb{R}^{\sum_{i = 1}^n |\sigma_i(A_i)|}$ such that $g(\vec{x}_\Gamma) = \vec{\sigma}$ and  $F_{\vec{\sigma}}(\vec{x}, g(\vec{x})) = 0$ for all $\vec{x}$ in some neighborhood $V$ around $(\vec{x}_\Gamma, \vec{\sigma})$. 

    To show that $\vec{\sigma}$ is strongly robust, fix $\epsilon > 0$. Since the players' utilities is a continuous function of $\vec{\sigma}$, we have that there exists $\epsilon' > 0$ such that $\|\vec{\sigma}' - \vec{\sigma}\|_2 < \epsilon' \Longrightarrow |u_i(\vec{\sigma}) - u_i(\vec{\sigma}')| < \epsilon/2$. Let $\delta_1 > 0$ be a value such that $\|\vec{x}_\Gamma - \vec{x}\|_2 < \delta_1 \Longrightarrow \|g(\vec{x}_\Gamma) - g(\vec{x}')\|_2 < \epsilon'$ (which exists since $g$ is continuous), and let $\delta_2 > 0$ be another value such that
    $\|\vec{x}\|_2 < \delta_2 \Longrightarrow \|\vec{x}\|_1 < \epsilon/2$. 
    Let $\delta' = \min(\delta_1, \delta_2)$ and let $\delta > 0$ be a value such that $\|\vec{x}\|_\infty < \delta \Longrightarrow \|\vec{x}\|_2 < \delta'$. Note that $\delta_2$ and $\delta$ exist since the infinite, Manhattan and Euclidean norms are equivalent.
    
    Let $\Gamma'$ be a normal-form game such that $d(\Gamma, \Gamma') < \delta$ and let $u'_i$ denote the utility of player $i$ in $\Gamma'$. Since $\Gamma'$ satisfies that $d(\Gamma, \Gamma') < \delta$, we have that $\|\vec{x}_\Gamma - \vec{x}_{\Gamma'}\|_\infty < \delta$, and therefore that $\|\vec{x}_\Gamma - \vec{x}_{\Gamma'}\|_2 < \delta'$. This in turn implies that $\|g(\vec{x}_\Gamma) - g(\vec{x}_{\Gamma'})\|_2 < \epsilon'$, which is equivalent to $\|\vec{\sigma} - g(\vec{x}_{\Gamma'})\|_2 < \epsilon'$, and therefore $|u_i(\vec{\sigma}) - u_i(g(\vec{u}_{\Gamma'}))| < \epsilon/2$. It follows that $$|u_i(\vec{\sigma}) - u_i'(g(\vec{u}_{\Gamma'}))| \le |u_i(\vec{\sigma}) - u_i(g(\vec{u}_{\Gamma'}))| + |u_i(g(\vec{u}_{\Gamma'})) - u'_i(g(\vec{u}_{\Gamma'}))| < \frac{\epsilon}{2} + |(u_i - u'_i)(g(\vec{u}_{\Gamma'}))|.$$

    To complete the proof, we have to show that $|(u_i - u'_i)(g(\vec{u}_{\Gamma'}))| \le \epsilon/2$. Note that $$|(u_i - u'_i)(g(\vec{u}_{\Gamma'}))| \le \|\vec{x}_\Gamma - \vec{x}_\Gamma'\|_1$$ since all the components of $\vec{\sigma}$ are values between $0$ and $1$. Since $\|\vec{x}_\Gamma - \vec{x}_\Gamma'\|_2 < \delta \le \delta_2$, we have that $\|\vec{x}_\Gamma - \vec{x}_\Gamma'\|_1 < \epsilon/2$, which implies that $$|(u_i - u'_i)(g(\vec{u}_{\Gamma'}))| \le \epsilon/2,$$
    as desired.
    \end{proof}

\end{document}